\documentclass{ejm}

\usepackage{hyperref}
\usepackage{caption}

\pdfoutput=1

\newcommand{\HighResFig}[2]{#2} 

\usepackage{mystyle}

\usepackage{MyMnSymbol} 
\usepackage{xcolor}
\usepackage{microtype}
\usepackage{pbox}
\usepackage{epstopdf}

\HighResFig{
\graphicspath{{./images-all/}}
}{
\graphicspath{{./images/}}
}

\newcommand{\shortcite}{\cite}

\usepackage{ejm-additional}

\title[Quantum Optimal Transport for Tensor Field Processing]{Quantum Optimal Transport\\for Tensor Field Processing}

\author[G. Peyr\'e, L. Chizat, F-X. Vialard and J. Solomon]{%
  Gabriel Peyr\'e$\,^1$,\ns
  L\'ena\"{i}c Chizat$\,^2$,\ns
  Fran\c{c}ois-Xavier Vialard$\,^2$ \and
  Justin Solomon$\,^3$
}

\affiliation{%
  $^1\,$CNRS and \'Ecole Normale Sup\'erieure\\
    email\textup{\nocorr: \texttt{gabriel.peyre@ens.fr}}\\
  $^2\,$Univ. Paris-Dauphine, INRIA Mokaplan\\
  $^3\,$MIT}

\begin{document}

\maketitle


\begin{abstract}
This article introduces a new notion of optimal transport (OT) between tensor fields, which are measures whose values are positive semidefinite (PSD) matrices.
This ``quantum'' formulation of OT (Q-OT) corresponds to a relaxed version of the classical Kantorovich transport problem, where the fidelity between the input PSD-valued measures is captured using the geometry of the Von-Neumann quantum entropy.
We propose a quantum-entropic regularization of the resulting convex optimization problem, which can be solved efficiently using an iterative scaling algorithm. This method is a generalization of the celebrated Sinkhorn algorithm to the quantum setting of PSD matrices. 
We extend this formulation and the quantum Sinkhorn algorithm to compute barycenters within a collection of input tensor fields. 
We illustrate the usefulness of the proposed approach on applications to procedural noise generation, anisotropic meshing, diffusion tensor imaging and spectral texture synthesis. 
\end{abstract}

\begin{keywords}
Optimal transport, tensor field, PSD matrices, quantum entropy. 
\end{keywords}
\newcommand{\meshFigIntr}[1]{\includegraphics[width=.100\linewidth,trim=140 10 120 0,clip]{mesh-intro/interpol-#1}}
\newcommand{\imgFigIntr}[1]{\includegraphics[width=.105\linewidth,trim=80 20 80 20,clip]{2d-intro/interpol-ellipses-#1}}

\begin{figure}\centering
\centering
\HighResFig{
\begin{tabular}{@{\hspace{1mm}}c@{\hspace{1mm}}c@{\hspace{1mm}}c@{\hspace{1mm}}c@{\hspace{1mm}}c@{\hspace{1mm}}c@{\hspace{1mm}}c@{\hspace{1mm}}c@{\hspace{1mm}}c@{}}
\imgFigIntr{1}&
\imgFigIntr{2}&
\imgFigIntr{3}&
\imgFigIntr{4}&
\imgFigIntr{5}&
\imgFigIntr{6}&
\imgFigIntr{7}&
\imgFigIntr{8}&
\imgFigIntr{9}\\
\meshFigIntr{1}&
\meshFigIntr{2}&
\meshFigIntr{3}&
\meshFigIntr{4}&
\meshFigIntr{5}&
\meshFigIntr{6}&
\meshFigIntr{7}&
\meshFigIntr{8}&
\meshFigIntr{9}\\
$t=0$ & $t=1/8$ & $t=1/4$ & $t=3/8$ & $t=1/2$ & $t=5/8$ & $t=3/4$ & $t=7/8$ & $t=1$ 
\end{tabular}
}{\includegraphics[width=\linewidth]{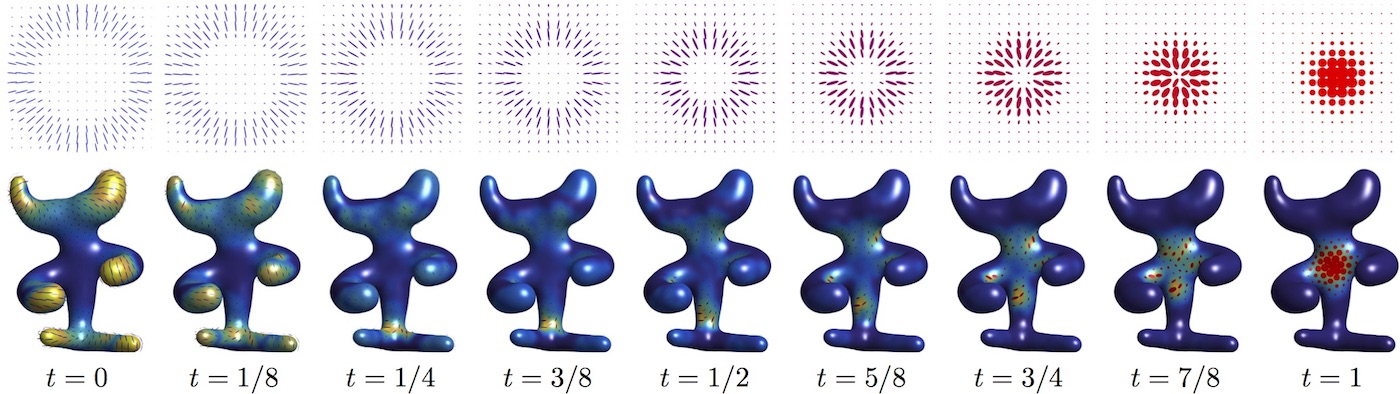}}
\caption{
Given two input fields of positive semidefinite matrices (displayed at times $t \in \{0,1\}$ using ellipses) on some domain (here, a 2-D planar square and a surface mesh), our Quantum Optimal Transport (Q-OT) method defines a continuous interpolating path for $t \in [0,1]$. Unlike linear interpolation schemes, Q-OT transports the ``mass'' of the tensors (size of the ellipses) as well as their anisotropy and orientation. This interpolation, and its extension to finding the barycenter of several input fields, is computed using a fast extension of the well-known Sinkhorn algorithm.
} \label{fig:intro}
\end{figure}


\section{Introduction}
\label{sec-intro}


Optimal transport (OT) is an active field of research at the intersection of probability theory, PDEs, convex optimization and numerical analysis. 
OT offers a canonical way to lift a ground distance on some metric space to a metric between arbitrary probability measures defined over this base space. OT distances offer many interesting  features and in particular lead to a geometrically faithful way to manipulate and interpolate probability distributions.

\subsection{Previous Work}

\subsubsection{Scalar-valued optimal transport.}
Dating back to the eighteenth century, classical instances of the optimal transport problem seek a minimal-cost matching between two distributions defined over a geometric domain, e.g.\ matching supply to demand while incurring minimal cost. 
Initially formulated by Monge in terms of an unknown map transporting mass~\shortcite{Monge1781}, its reformulation by Kantorovich~\shortcite{Kantorovich42} as a linear program (static formulation) enables the use of convex analysis to study its structure and develop numerical solvers. 
The equivalence between these two formulations was introduced by Brenier~\shortcite{Brenier91} and opened the door to a dynamical (geodesic) reformulation~\cite{benamou2000computational}. We refer to~\cite{santambrogio2015optimal} for a review of the theoretical foundations of OT. 

The basic OT problem has been extended in various ways, a typical illustration of which being the computation of a barycenter (Fr\'echet mean) of input measures, a convex program studied by Agueh and Carlier~\shortcite{agueh-2011}.
OT also has found numerous applications, for instance in computer vision (under the name ``earth mover's distance'')~\cite{rubner-2000} and computer graphics~\cite{bonneel-2011}. 

\subsubsection{Unbalanced transport.}

While the initial formulations of OT are restricted to positive measures of equal mass (normalized probability distributions), a recent wave of activity has proposed and studied a family of ``canonical'' extensions 
to the \emph{unbalanced} setting of arbitrary positive measures. This covers both a dynamic formulation~\cite{LieroMielkeSavareCourt,kondratyev2015,2016-chizat-focm} and a static one~\cite{LieroMielkeSavareLong,2015-chizat-unbalanced} and has been applied in machine learning~\cite{frogner-2015}. 
Our work extends this static unbalanced formulation to tensor-valued measures.

\subsubsection{Entropic regularization.}

The current state-of-the-art OT approximation for arbitrary ground costs uses entropic regularization of the transport plan. This leads to strictly convex programs that can be solved using a simple class of highly parallelizable diagonal scaling algorithms. The landmark paper of Cuturi~\shortcite{cuturi-2013} inspired detailed study of these solvers, leading to various generalizations of Sinkhorn's algorithm~\shortcite{Sinkhorn64}. This includes for instance the use of fast convolutional structures~\cite{solomon-2015}, extensions to barycenters~\cite{benamou-2015} and application to unbalanced OT~\cite{frogner-2015,2016-chizat-sinkhorn}.
These entropic regularization techniques correspond to the use of projection and proximal maps for the Kullback--Leibler Bregman divergence and are equivalent to iterative projections~\cite{bregman-1967} and Dykstra's algorithm~\cite{Dykstra83,bauschke-lewis}. 
An important contribution of the present work is to extend these techniques to the matrix setting (i.e., using quantum divergences). Note that quantum divergences have been recently used to solve some machine learning problems~\cite{Dhillon2008,Kulis2009,Chandrasekaran2017}.

\subsubsection{Tensor field processing.}

Tensor-valued data are ubiquitous in various areas of imaging science, computer graphics and vision. In medical imaging, diffusion tensor imaging (DTI)~\cite{wandell2016clarifying} directly maps observed data to fields of tensors, and specific processing methods have been developed for this class of data (see e.g.~\cite{Dryden2009,Deriche2006}). Tensor fields are also at the heart of anisotropic diffusion techniques in image processing~\cite{weickert1998anisotropic}, anisotropic meshing~\cite{alliez2003anisotropic,demaret2006image,peyre-iccv-09}, and anisotropic texture generation~\cite{LagaImproving}; they also find application in line drawing~\cite{VaxmanCDPBHB16} and data visualization~\cite{HotzFHHJJ04}. 

\subsubsection{OT and Sinkhorn on pairs of tensors.}

Although this is not the topic of this paper, we note that several 
notions of OT have been defined between two tensors \emph{without} any spatial displacement. Gurvits introduced in~\cite{gurvits2004classical} a Sinkhorn-like algorithm to couple two tensors by an endomorphism that preserves positivity. This algorithm, however, is only known to converge in the case where the two involved tensors are the identity matrices;  see~\cite{georgiou2015positive} for a detailed analysis. 
In contrast to this ``static'' formulation that seeks for a joint coupling between the tensors, a geodesic dynamic formulation is proposed in~\cite{Carlen2014}; see also~\cite{Chen2016,ChenGangbo17} for a related approach.

\subsubsection{OT on tensor fields.}

The simplest way to define OT-like distances between arbitrary vector-valued measures is to use dual norms~\cite{Ning2014metrics}, which correspond to generalizations of $W_1$ OT for which transport cost equals ground distance. The corresponding metrics, however, have degenerate behavior in interpolation and barycenter problems (much like the $L^1$ norm on functions) and only use the linear structure of matrices rather than their multiplicative structure.
More satisfying notions of OT have recently been proposed in a dynamical (geodesic) way~\cite{Chen2016,JiangSpectral}. 
A static formulation of a tensor-valued OT is proposed in~\cite{ning2015matrix}, but it differs significantly from ours. It is initially motivated using a lifting that squares the number of variables, but a particular choice of cost reduces the computation to the optimization of a pair of couplings. In contrast, the formulation we propose in the present article is a direct generalization of unbalanced OT to matrices, which in turn enables the use of a Sinkhorn algorithm.




\subsection{Contributions} 

We present a new static formulation of OT between tensor fields, which is the direct generalization of unbalanced OT from the scalar to the matrix case.
Our second contribution is a fast entropic scaling algorithm generalizing the celebrated Sinkhorn iterative scheme. This leads to a method to compute geometrically-faithful interpolations between two tensor fields. 
Our third contribution is the extension of this approach to compute barycenters between several tensor fields. 
The Matlab code to reproduce the results of this article is available online.\footnote{\url{https://github.com/gpeyre/2016-wasserstein-tensor-valued}}

%

\subsection{Notation}

In the following, we denote $\Ss^d \subset \RR^{d \times d}$ the space of symmetric matrices, $\Ss_+^d$ the closed convex cone of positive semidefinite matrices, and $\Ss_{++}^d$ the open cone of positive definite matrices. 
We denote $\exp: \Ss^d \to \Ss_{++}^d$ the matrix exponential, which is defined as $\exp(P)=U \diag_s(e^{\si_s})U^\top$ where $P=U \diag_s(\si_s) U^\top$ is an eigendecomposition of $P$.
We denote $\log: \Ss_{++}^d \to \Ss^d$ the matrix logarithm $\log(P)=U \diag_s(\log \si_s)U^\top$, which is the inverse of $\exp$ on $\Ss_{++}^d$. 
We adopt some conventions in order to deal conveniently with singular matrices.
We extend $(P,Q) \mapsto P \log (Q)$ by lower semicontinuity on $(\Ss_+^d)^2$, i.e.\ writing $Q = U \diag_s(\si_s) U^\top$ and $\tilde{P}=U^\top P U$, 
\eq{
P \log Q :=
\begin{cases}
P \log Q &\text{if $\ker Q=\emptyset$},\\
U[\tilde{P}\diag_s(\log \si_s)]U^\top & \text{if $\ker Q \subset \ker P $,}\\
+\infty & \text{otherwise,}
\end{cases}
}
with the convention $0\log 0 = 0$ when computing the matrix product in square brackets. Moreover, for $(P,(Q_i)_{i\in I}) \in\Ss^d \times (\Ss_+^d)^I$, the matrix $\exp(P + \sum_i \log Q_i)$ is by convention the matrix in $\Ss_+^d$ which kernel is $\sum_i \ker Q_i$ (and is unambiguously defined on the orthogonal of this space).

\begin{figure}\centering
\begin{minipage}[c]{.48\linewidth}
\begin{tabular}{@{}c@{}c@{}}
{\includegraphics[width=1\linewidth]{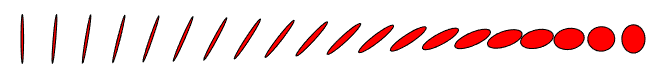}}\\
$X=[0,1]$, $d=2$ \\[2mm]
{\includegraphics[width=1\linewidth]{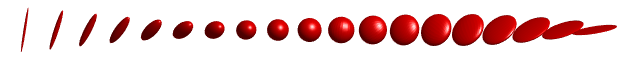}} \\
$X=[0,1]$, $d=3$
\end{tabular}
\end{minipage} 
\begin{minipage}[c]{.49\linewidth}
\begin{tabular}{@{}c@{}c@{}}
\includegraphics[width=.55\linewidth]{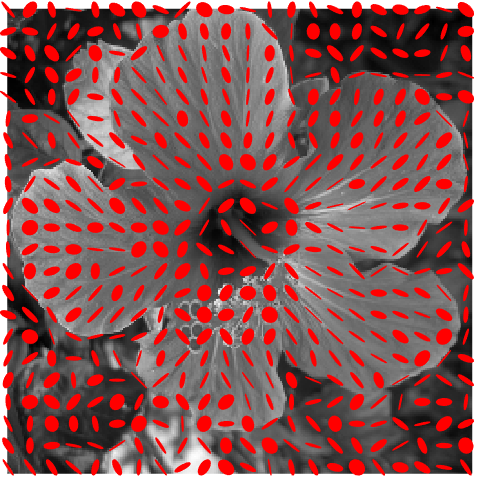}&
\includegraphics[width=.42\linewidth]{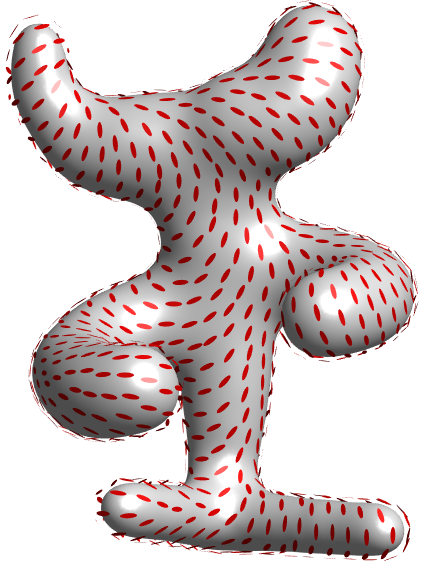}\\
 $X=[0,1]^2$ & $X=$surface \\
\end{tabular}
   \end{minipage}
\caption{Displays of various types of tensor-valued measures $\mu$. The principal directions of an ellipse at some $x_i \in X$ are the eigenvectors of $\mu_i \in \Ss_+^d$, while the principal widths are given by its eigenvalues. 
} \label{fig:display}
\end{figure}

A tensor-valued measure $\mu$ defined on some space $X$ is a vector-valued measure, where the ``mass'' $\mu(A) \in \Ss_+^d$ associated to a measurable set $A \subset X$ is a PSD matrix. In this article, in order to derive computational schemes, we focus on discrete measures. Such a measure $\mu$ is a sum of Dirac masses
$\mu = \sum_{i \in I} \mu_i \de_{x_i}$
where $(x_i)_i \subset X$, and $(\mu_i)_i \in \Ss_+^d$ is a collection of PSD matrices. In this case, $\mu(A)=\sum_{x_i \in A} \mu_i$. 
Figure~\ref{fig:display} shows graphically some examples of tensor-valued measures; we use this type of visualization through the article. 
In the following, since the sampling points $(x_i)_i$ are assumed to be fixed and clear from the context, to ease readability, we do not make the distinction between the measure $\mu$ and the collection of matrices $(\mu_i)_i$. This is an abuse of notation, but it is always clear from context whether we are referring to a measure or a collection of matrices. 

The quantum entropy (also called von Neumann entropy) of a tensor-valued measure is 
\eql{\label{eq-h-quantum}
	H(\mu) \eqdef \sum_i H(\mu_i)
	\qwhereq 
}
\eq{
	\foralls P \in \Ss^d, \quad
	H(P) \eqdef -\tr( P \log(P)-P) - \iota_{\Ss_{+}^d}(P),
}
where $\iota_C$ is the indicator function of a closed convex set $C$, i.e. $\iota_C(P)=0$ if $P \in C$ and $\iota_C(P)=+\infty$ otherwise.
Note that $H$ is a concave function. 
The quantum Kullback-Leibler divergence (also called quantum relative entropy) is the Bregman divergence associated to $-H$. For a collection of PSD matrices $\mu=(\mu_i)_i, \xi=(\xi_i)_i$ in $\Ss_+^d$ corresponding to measures defined on the same grid, it is defined as 
\eql{\label{eq-kl-quantum}
	\KL(\mu|\xi) \eqdef \sum_i \KL(\mu_i|\xi_i), 
}
where for all $(P,Q) \in \Ss_{+}^d \times \Ss_{+}^d$, we denote
\eq{	
	\KL(P|Q) \eqdef 
	\tr( P \log (P) - P\log (Q) - P + Q ) + \iota_{\Ss_{++}^d}(P)
}
which is convex with respect to both arguments. 
The inner product between collections of matrices $\mu=(\mu_i)_i, \xi=(\xi_i)_i$ is 
\eq{
	\dotp{\mu}{\xi} \eqdef \sum_{i} \dotp{\mu_i}{\xi_i} \eqdef \sum_{i} \tr( \mu_i \xi_i^\top ).
}
Given a collection of matrices $\ga=(\ga_{i,j})_{i \in I,j \in J}$ the marginalization operators read
\eq{
	\ga \ones_J \eqdef \Big(\sum_j \ga_{i,j}\Big)_i 
	\qandq
	\ga^\top \ones_I \eqdef \Big(\sum_i \ga_{i,j}\Big)_j .
}


\section{Kantorovich Problem for Tensor-Valued Transport}

We consider two measures that are sums of Dirac masses
\eql{\label{eq-input-measures}
	\mu = \sum_{i \in I} \mu_i \de_{x_i}
	\qandq
	\nu = \sum_{j \in J} \nu_j \de_{y_j}
}
where $(x_i)_i \subset X$ and $(y_j)_j \subset Y$, and $(\mu_i)_i \in \Ss_+^d$ and $(\nu_j)_j \in \Ss_+^d$ are collections of PSD matrices. 
Our goal is to propose a new definition of OT between $\mu$ and $\nu$.

\subsection{Tensor Transportation}
\label{sec-tensor-ot}

Following the initial static formulation of OT by Kantorovich~\shortcite{Kantorovich42}, we define a coupling $\ga = \sum_{i,j}\ga_{i,j} \de_{(x_i,y_j)}$ as a measure over the product $X \times Y$ that encodes the transport of mass between $\mu$ and $\nu$. In the matrix case, $\ga_{i,j} \in \Ss_+^d$ is now a PSD matrix, describing how much mass is moved between $\mu_i$ and $\nu_j$. 
Exact (balanced) transport would mean that the marginals $(\ga \ones_J,\ga^\top \ones_I)$ must be equal to the input measures $(\mu,\nu)$. But as remarked by Ning et al.~\shortcite{ning2015matrix}, in contrast to the scalar case, in the matrix case (dimension $d>1$), this constraint is in general too strong, and there might exist no coupling satisfying these marginal constraints.
We advocate in this work that the natural workaround for the matrix setting is the unbalanced case, and following~\cite{LieroMielkeSavareLong}, we propose to use a ``relaxed'' formulation where the discrepancy between the marginals $(\ga \ones_J,\ga^\top \ones_I)$ and the input measures $(\mu,\nu)$ is quantified according to some divergence between measures. 

In the scalar case, the most natural divergence is the Kulback-Leibler divergence (which in particular gives rise to a natural Riemannian structure on positive measures, as defined in~\cite{LieroMielkeSavareCourt,kondratyev2015,2016-chizat-focm}).  We propose to make use of its quantum counterpart~\eqref{eq-kl-quantum} 
%
via
  the following convex program
\eql{\label{eq-Kantorovich}
	W(\mu,\nu) = \min_{\ga}\ \dotp{\ga}{c} + \rho_1 \KL(\ga \ones_J|\mu) + \rho_2 \KL(\ga^\top \ones_I|\nu) 
}
subject to the constraint $\foralls (i,j), \ga_{i,j} \in \Ss_+^d$.
Here $\rho_1,\rho_2 >0$ are constants balancing the ``transport'' effect versus the local modification of the matrices. 

The matrix $c_{i,j} \in \RR^{d \times d}$ measures the cost of displacing an amount of (matrix) mass $\ga_{i,j}$ between $x_i$ and $y_j$  as $\tr(\ga_{i,j}c_{i,j})$. 
A typical cost, assuming $X=Y$ is a metric space endowed with a distance $d_X$, is
\eq{
	c_{i,j} = d_X(x_i,y_j)^\al \Id_{d \times d}, 
}
for some $\al>0$.  In this case, one should interpret the trace as the global mass of a tensor, and the total transportation cost is simply 
\eq{
	\dotp{\ga}{c} = \sum_{i,j} d_X(x_i,y_j)^\al \tr(\ga_{i,j}).
}

\begin{rem}[Classical OT]\label{rem-classical-ot}
	In the scalar case $d=1$, \eqref{eq-Kantorovich} recovers exactly the log-entropic definition~\cite{LieroMielkeSavareLong} of unbalanced optimal transport, which is studied numerically by Chizat et al.~\shortcite{2016-chizat-sinkhorn}. 
	For isotropic tensors, i.e., all $\mu_i$ and $\nu_j$ are scalar multiples of the identity $\Id_{d \times d}$, the computation also collapses to the scalar case (the $\ga_{i,j}$ are also isotropic). More generally, if all the $(\mu_i,\nu_j)_{i,j}$ commute, they diagonalize in the same orthogonal basis, and~\eqref{eq-Kantorovich} reduces to performing $d$ independent unbalanced OT computations along each eigendirection. 
\end{rem}

\begin{rem}[Cost between single Dirac masses]
	When $\mu=P \de_{x}$ and $\nu=Q \de_{x}$ are two Dirac masses at the same location $x$ and associated tensors $(P,Q) \in (\Ss_{+}^d)^2$,
	one obtains the following ``metric'' between tensors (assuming $\rho_1=\rho_2=1$ for simplicity)
	\eql{\label{eq-cost-single-dirac}
		\sqrt{W(P \de_{x},Q \de_{x})} = D(P,Q) \eqdef \tr\pa{
			P+Q-2 \mathfrak{M}(P,Q)
		}^{\frac{1}{2}}
	}	
	where $\mathfrak{M}(P,Q) \eqdef \exp(\log(P)/2+\log(Q)/2)$.
	When $(P,Q)$ commute, one has $D(P,Q) = \norm{\sqrt{P}-\sqrt{Q}}$ which is a distance.
	In the general case, we do not know whether $D$ is a distance (basic numerical tests do not exclude this property).
	%
	
%
	%
	%
\end{rem}

\begin{rem}[Quantum transport on curved geometries]
	If $(\mu,\nu)$ are defined on a non-Euclidean space $Y=X$, like a smooth manifold, then formulation~\eqref{eq-Kantorovich} should be handled with care, since it assumes all the tensors $(\mu_i,\nu_j)_{i,j}$ are defined 
	in some common basis. 
	For smooth manifolds, the simplest workaround is to assume that these tensors are defined with respect to carefully selected orthogonal bases of the tangent planes, so that the field of bases is itself smooth. Unless the manifold is parallelizable, in particular if it has a trivial topology, it is not possible to obtain a globally smooth orthonormal basis; in general, any such field necessarily has a few singular points. In the following, we compute smoothly-varying orthogonal bases of the tangent planes (away from singular points) following the method of Crane et al.~\shortcite{crane2010trivial}. 
	In this setting, the cost is usually chosen to be $c_{i,j} = d_X(x_i,x_j)^\al \Id_{d \times d}$ where $d_X$ is the geodesic distance on $X$. 
\end{rem}

\begin{rem}[Measure lifting]
An alternative to compute OT between tensor fields would be to rather lift the input measure $\mu$ to a measures $\bar \mu \eqdef \sum_{i \in I}  \de_{(\mu_i,x_i)}$ defined over the space $X \times \Ss_+^d$ (and similarly for the lifting $\bar\nu$ of $\nu$) and then use traditional scalar OT over this lifted space (using a ground cost taking into account both space and tensor variations). 
Such a naive approach would destroy the geometry of tensor-valued measures (which corresponds to the topology of weak convergence of measures), and result in very different interpolations. For example, a sum of two nearby Diracs on $X=\RR$  
\eq{
  \mu = P \delta_0 + Q \delta_s   \qwhereq   
  P \eqdef \begin{pmatrix}1 & 0 \\ 0 & 0\end{pmatrix} 
  \qandq 
  Q \eqdef \begin{pmatrix}0 & 0\\0 & 1\end{pmatrix}
}
is treated by our method as being very close to $\Id_{2\times 2} \de_{0}$ (which is the correct behaviour of a \emph{measure}), whereas it would be lifted to 
$\bar\mu=\de_{(0,P)}+\de_{(s,Q)}$ over $\RR \times \Ss_2^+$, which is in contrast very far from $\de_{(0,\Id_{2\times 2})}$.
\end{rem}

\if 0
\subsection{Tensor Transportation on Surfaces}

\todo{If we use Keenan's code, no need for such a long paragraph, only a remark is ok to explain that tangent plane needs to be properly parameterized, with the caveat of having a few singular points.}

The formulation so far assumed that measures $\mu$ and $\nu$ are defined on an Euclidean spaces $(X,Y)$. The usual choice of metric, assuming $X=Y$ is some $d$-dimensional smooth manifolds, is to use $c_{i,j} = d(x_i,x_j)_X^\al \Id_{d \times d}$ where $d_X$ is the geodesic distance between $(x_i,x_j) \in X^2$. 
In our setting of tensor-valued measures, an extra difficulty, with respect to the usual scalar OT case, is that each tensors $\mu_i \in \Ss_+^d$ is now defined with respect to some particular choice of basis $B_i$ of the tangent space $\Tt_i$ of $X$ at $x_i \in X$. 
This implies that one needs to settle a change of basis when transport such a tensor over a different tangent plane $\Tt_j$ at $x_j$. We name $T_{i,j}$ this change of basis, which corresponds to the notion of parallel transport from differential geometry.

Assuming that each $\ga_{i,j}$ is represented with respect to the basis $B_i$ of $\Tt_i$, the marginalization operator $\ga \ones_J = (\sum_j \ga_{i,j})_j$ stays the same (because all the summed tensors are represented with respect to a common basis $B_i$), but one needs to re-define the marginalization operator $\ga^\top \ones_J$ as
\eql{\label{eq-marginalization-surf}
	\ga^\top \ones_I \eqdef
	\sum_i T_{i,j} \ga_{i,j} T_{i,j}^\top.
}

\begin{rem}[Invariance property]\label{rem-invariance}
	It is important that the overall transport does not depend on a particular choice of basis $B_i$ in each tangent plane. This means that the computation of the parallel transport $T_{i,j}$ should be such that, replacing $B_i$ by $B_i R_i$ where $R_i \in \Oo_d$ (an orthogonal matrix), then the overall transportation problem is unchanged after the replacement of $(\mu_i,\nu_j,\ga_{i,j})$ by $(R_i \mu_i R_i^\top, R_j \nu_j R_j^\top, R_i\ga_{i,j}R_i^\top)$.
\end{rem}

\begin{rem}[Extrinsic parallel transport]
	Assuming that $X$ is a $d$-dimensional manifold embedded in some Euclidean space $\RR^{d'}$ with $d' \geq d$, so that tangent plane bases are represented as matrices $B_i \in \RR^{d' \times d}$. In this case, a simple choice of parallel transport operator is $T_{i,j} \eqdef B_j^\top B_i \in \RR^{d \times d}$. Although this choice is quite crude, we found it to be sufficient for the targeted sets of applications. Note also that this choice satisfies the invariance property of Remark~\ref{rem-invariance}. It is of course possible to use more advance constructions of discrete connexions, for instance on triangulated surfaces, see~\cite{crane2010trivial,liu2016discrete}.
\end{rem}

\fi

\subsection{Quantum Transport Interpolation}

\newcommand{\muA}{\mu}
\newcommand{\muB}{\nu}

Given two input measures $(\muA,\muB)$, we denote by $\ga$ a solution of~\eqref{eq-Kantorovich} or, in practice, its regularized version (see~\eqref{eq-Kantorovich-regul} below). The coupling $\ga$ defines a (fuzzy) correspondence between the tensor fields. A typical use of this correspondence is to compute a continuous interpolation between these fields. Section~\ref{sec-numerics-interp} shows some numerical illustrations of this interpolation. Note also that Section~\ref{sec-q-bary} proposes a generalization of this idea to compute an interpolation (barycenter) between more than two input fields.

Mimicking the definition of the optimal transport interpolation (the so-called McCann displacement interpolation; see for instance~\cite{santambrogio2015optimal}), we propose to use $\gamma$ to define a path $t \in [0,1] \mapsto \mu_t$ interpolating between $(\muA,\muB)$. 
For simplicity, we assume the cost has the form $c_{i,j}=d_X(x_i,y_j)^\al \Id_{d \times d}$ for some ground metric $d_X$ on $X=Y$. We also suppose we can compute efficiently the interpolation between two points $(x_i,y_j) \in X^2$ as
\eq{
	x_{i,j}^t \eqdef \uargmin{x \in X} (1-t)d_X^2(x_i,x) + t d_X^2(y_j,x). 
}
For instance, over Euclidean spaces, $g_t$ is simply a linear interpolation, and over more general manifold, it is a geodesic segment.
We also denote
\eq{
	\bar\muA_i \eqdef \muA_i \Big( \sum_{j} \ga_{i,j} \Big)^{-1} 
	\qandq
	\bar\muB_j \eqdef \muB_j \Big( \sum_{i} \ga_{i,j} \Big)^{-1}
}
the adjustment factors which account for the imperfect match of the marginal associated to a solution of~\eqref{eq-Kantorovich-regul}; the adjusted coupling is
\eq{
	\ga_{i,j}^t \eqdef [(1-t) \bar\muA_i + t \bar\muB_{j}] \ga_{i,j}.
}

Finally, the interpolating measure is then defined as
\eql{\label{eq-interpolating}
	\foralls t \in [0,1], \quad
	\mu_t \eqdef \sum_{i,j} \ga_{i,j}^t \de_{x_{i,j}^t}.
}
One easily verifies that this measure indeed interpolates the two input measures, i.e. 
$(\mu_{t=0},\mu_{t=1})=(\muA,\muB)$. 
This formula~\eqref{eq-interpolating} generates the interpolation by creating a Dirac tensor $ \ga_{i,j}^t \de_{x_{i,j}^t}$ for each coupling entry $\ga_{i,j}$, and this tensor travels between $\mu_i \de_{x_i}$ (at $t=0$) and $\nu_j \de_{y_j}$ (at $t=1$).

\begin{rem}[Computational cost] We observed numerically that, similarly to the scalar case, the optimal coupling $\ga$ is sparse, meaning that only of the order of $O(|I|)$ non-zero terms are involved in the interpolating measure~\eqref{eq-interpolating}. Note that the entropic regularization algorithm detailed in Section~\ref{eq-q-sink} destroys this exact sparsity, but we found numerically that thresholding to zero the small entries of $\ga$ generates accurate approximations. 
\end{rem}

\if 0

\subsection{Extension to Non-symmetric Tensors}

\todo{Explain here the great idea of Justin, where the $\gamma$ is computed on the symmetric part (obtained by left or right polar decomposition) and then you interpolation on $SO(d)$ the orthogonal part. Beware that one needs to impose sign of the determinant to be fixed (e.g. positive) otherwise $SO((d)$ interpolation would fail. }

\fi


\section{Quantum Sinkhorn}
\label{eq-q-sink}

The convex program~\eqref{eq-Kantorovich} defining quantum OT is computationally challenging because it can be very large scale (problem size is $|I| \times |J|$) for imaging applications, and it involves matrix exponential and logarithm. In this section, leveraging recent advances in computational OT initiated by Cuturi~\shortcite{cuturi-2013}, we propose to use a similar entropy regularized strategy (see also section~\ref{sec-intro}), but this time with the quantum entropy~\eqref{eq-h-quantum}. 

\subsection{Entropic Regularization}

We define an entropic regularized version of~\eqref{eq-Kantorovich}
\eql{\label{eq-Kantorovich-regul}
	W_\epsilon(\mu,\nu) \eqdef \min_{\ga} \dotp{\ga}{c} + \rho_1\KL(\ga \ones_J|\mu) + \rho_2\KL(\ga^\top \ones_I|\nu) - \epsilon H(\ga). 
}
Note that when $\epsilon=0$, one recovers the original problem~\eqref{eq-Kantorovich}. 
This is a strongly convex program, with a unique solution. The crux of this approach, as already known in the scalar case (see~\cite{2016-chizat-sinkhorn}), is that its convex dual has a particularly simple structure, which is amenable to a simple alternating maximization strategy. 

\begin{prop}
	The dual problem associated to~\eqref{eq-Kantorovich-regul} reads
	\begin{multline}\label{eq-dual-pb}
		W_\epsilon(\mu,\nu)
		= 
		\umax{u,v} -\!
				\tr\Big[
						\rho_1 \sum_i (e^{u_i+\log(\mu_i)} - \mu_i) \\
					+   \rho_2 \sum_j (e^{v_j+\log(\nu_j)} - \nu_j)
					+    \epsilon \sum_{i,j}  e^{ \Kern(u,v)_{i,j} }
			 \Big], 
	\end{multline}	
	where $u=(u_i)_{i \in I}, v=(v_j)_{j \in J}$ are collections of arbitrary symmetric (not necessarily in $\Ss_+^d$) matrices $u_i, v_j \in \Ss^d$, 
	where we define
	\eql{\label{eq-defn-K}
		\Kern(u,v)_{i,j} \eqdef -\frac{c_{i,j} + \rho_1 u_i + \rho_2 v_j}{\epsilon}.
	} 
	Furthermore, the following primal-dual relationships hold at optimality:
	\eql{\label{eq-primal-dual-scaling}
		\foralls (i,j), \quad \ga_{i,j} = \exp\pa{ \Kern(u,v)_{i,j} }.
	}
\end{prop}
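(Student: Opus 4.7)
The approach is Fenchel--Rockafellar duality adapted to the matrix setting. The primal~\eqref{eq-Kantorovich-regul} splits cleanly into a ``coupling'' term depending on $\ga$ alone (the cost plus the quantum negentropy $-\epsilon H(\ga)$) and two ``marginal'' terms depending on the linear images $\ga \ones_J$ and $\ga^\top \ones_I$. I would separate these dependencies by introducing auxiliary variables $a \eqdef \ga\ones_J$, $b \eqdef \ga^\top \ones_I$ with matrix Lagrange multipliers $(\rho_1 u_i)_i$ and $(\rho_2 v_j)_j$ in $\Ss^d$ (the rescaling by $\rho_1,\rho_2$ is a convenient normalization that produces the $\rho_1 u_i + \rho_2 v_j$ combination inside $\Kern$). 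The dual~\eqref{eq-dual-pb} is then obtained by partial minimization in $a$, $b$, and $\ga$ in turn.

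The first block computation is the Fenchel conjugate of $P \mapsto \KL(P|Q)$ over $\Ss^d$. Matrix differentiation of $\tr(P\log P - P\log Q - P + Q)$ yields the first-order condition $U - \log P + \log Q = 0$, i.e.\ $P^\star = \exp(U+\log Q)$; substituting back gives
\eq{
	(\KL(\cdot|Q))^*(U) = \tr\bigl(e^{U + \log Q} - Q\bigr),
}
with the singular case covered by the $\log$/$\exp$ conventions from the notation section. Applying this at each marginal (with multiplier $\rho_1 u$ and $\rho_2 v$) produces the two ``marginal'' summands in~\eqref{eq-dual-pb}.

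The second block is the residual minimization in $\ga$, which after collecting terms reads
\eq{
	\min_{\ga_{i,j} \in \Ss_+^d} \sum_{i,j} \Bigl[ \dotp{\ga_{i,j}}{c_{i,j} + \rho_1 u_i + \rho_2 v_j} + \epsilon \tr\bigl(\ga_{i,j}\log \ga_{i,j} - \ga_{i,j}\bigr)\Bigr].
}
This decouples across $(i,j)$, and using $\partial_P \tr(P\log P - P) = \log P$ the first-order condition $\log \ga_{i,j} = -\epsilon^{-1}(c_{i,j} + \rho_1 u_i + \rho_2 v_j) = \Kern(u,v)_{i,j}$ is exactly the primal--dual formula~\eqref{eq-primal-dual-scaling}. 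Plugging the optimizer back collapses the $(i,j)$ term to $-\epsilon \tr\bigl(e^{\Kern(u,v)_{i,j}}\bigr)$, and assembling the three contributions gives~\eqref{eq-dual-pb}.

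Finally, I would justify strong duality: the primal objective is strongly convex and coercive thanks to the quantum entropy term, so a unique primal minimizer exists; the constraints $a = \ga\ones_J$, $b = \ga^\top\ones_I$ are linear and qualify trivially. Hence the duality gap vanishes and the pointwise relation derived from the FOC is automatically attained. The main technical obstacle I foresee is the matrix calculus for the conjugate of the quantum $\KL$: because $\log$ and $\exp$ do not commute in general, one must verify that the critical-point calculation is valid even when $Q$ (or $\ga_{i,j}$) is singular, which is exactly what the lower-semicontinuous extensions fixed in the notation section are designed to handle; a secondary care point is that $u_i,v_j$ range over all of $\Ss^d$ (no sign constraint), which is correct because $\KL(\cdot|\mu_i)$ already encodes the PSD constraint on the primal side through its indicator.
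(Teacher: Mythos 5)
Your proposal is correct and follows essentially the same route as the paper: Fenchel--Rockafellar duality (which you unpack as a Lagrangian decoupling with auxiliary marginal variables), the quantum KL conjugate identity $(\KL(\cdot|Q))^*(U)=\tr(e^{U+\log Q}-Q)$, and the first-order condition $\log\ga_{i,j}=\Kern(u,v)_{i,j}$ for the entropic coupling block. The only cosmetic difference is that the paper first rewrites $\dotp{\ga}{c}-\epsilon H(\ga)$ as $\epsilon\KL(\ga|\xi)-\epsilon\tr(\xi)$ with $\xi_{i,j}=e^{-c_{i,j}/\epsilon}$ so the same conjugate formula can be invoked for all three terms at once, whereas you treat the coupling block directly; the computations are equivalent and lead to the same dual~\eqref{eq-dual-pb} and primal--dual relation~\eqref{eq-primal-dual-scaling}.
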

\begin{proof} \if 0
Applying the Fenchel--Rockafellar duality theorem \justin{cite me!} to~\eqref{eq-Kantorovich-regul} leads to the dual program
\eq{
	\umax{u,v} - \epsilon H^*( \Kern(u,v) |\xi) 
	-\rho_1 \KL^*(u|\mu) - \rho_2 \KL^*(v|\nu) - \epsilon \tr(\xi), 
}
where here $\KL^*(\cdot|\mu)$ corresponds to the Legendre transform with respect to the first argument of the KL divergence.
The following Lengendre--Fenchel \justin{cite me!} formula leads to the desired result:
\begin{align*} 
	H^*(K) &= \textstyle\sum_{i,j} \tr( e^{K_{i,j}} ) \\
	\KL^*(u|\mu) &= \textstyle\sum_i \tr( \exp(u_i+\log(\mu_i)) - \mu_i).
\end{align*}
\lenaic{replace w/ the following proof (after double check):}
\fi
Applying the Fenchel--Rockafellar duality theorem~\cite{rockafellar-convex} to~\eqref{eq-Kantorovich-regul} leads to the dual program
\eq{
	\umax{u,v} - \epsilon \KL^*( \Kern_0(u,v)|\xi) 
	-\rho_1 \KL^*(u|\mu) - \rho_2 \KL^*(v|\nu) - \epsilon \tr(\xi), 
}
where here $\KL^*(\cdot|\mu)$ corresponds to the Legendre transform with respect to the first argument of the KL divergence, $ \Kern_0(u,v)_{i,j} \eqdef -\frac{\rho_1 u_i + \rho_2 v_j}{\epsilon}$ and $\xi_{i,j} \eqdef \exp(-c_{i,j}/\epsilon)$ for all $i,j$.
The following Lengendre formula leads to the desired result:
\eq{
	\KL^*(u|\mu) = \textstyle\sum_i \tr( \exp(u_i+\log(\mu_i)) - \mu_i).
}
\end{proof}

\subsection{Quantum Sinkhorn Algorithm}

It is possible to use Dykstra's algorithm~\shortcite{Dykstra83} (see~\cite{bauschke-lewis} for its extension to Bregman divergences) to solve~\eqref{eq-dual-pb}. This corresponds to alternatively maximizing~\eqref{eq-dual-pb} with respect to $u$ and $v$. 
The following proposition states that the maximization with respect to either $u$ or $v$ leads to two fixed-point equations. 
These fixed points are conveniently written using the log-sum-exp operator, 
\eql{\label{eq-dfn-lse}
	\LSE_j( K ) \eqdef \Big( \log \sum_j \exp(K_{i,j}) \Big)_i, 
}
where the sum on $j$ is replaced by a sum on $i$ for $\LSE_i$. 

\begin{prop}\label{prop-fixed-points}
	For $v$ fixed (resp.\ $u$ fixed), the minimizer $u$ (resp.\ $v$) of~\eqref{eq-dual-pb} satisfies
	\begin{align}\label{eq-fixed-point-u}
		\foralls i, \quad u_i = \LSE_j(\Kern(u,v))_i-\log(\mu_i), \\
		\foralls j, \quad v_j = \LSE_i(\Kern(u,v))_j-\log(\nu_j), \label{eq-fixed-point-v}
	\end{align}
	where $\Kern(u,v)$ is defined in~\eqref{eq-defn-K}.
\end{prop}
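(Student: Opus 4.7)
The plan is to treat \eqref{eq-dual-pb} as a smooth concave function of the tuple $u=(u_i)_i \in (\Ss^d)^I$ with $v$ held fixed, and simply write down the first-order optimality condition for each $u_i$. Isolating the terms that depend on a fixed $u_i$, the objective reduces (up to $u_i$-independent constants) to
\eq{
   -\rho_1 \tr\bigl(e^{u_i + \log \mu_i}\bigr) - \epsilon \sum_j \tr\bigl(e^{\Kern(u,v)_{i,j}}\bigr),
}
so the task is to differentiate each of these two trace-exponential terms with respect to $u_i \in \Ss^d$.

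The main analytic ingredient is the matrix identity $\nabla_A \tr(e^A) = e^A$ on $\Ss^d$, which follows from Duhamel's formula $\tfrac{d}{dt} e^{A+tB}\bigr|_{t=0} = \int_0^1 e^{sA} B e^{(1-s)A}\,ds$ combined with cyclicity of the trace. Applying this to the first term and the chain rule to the second (noting that $\Kern(u,v)_{i,j}$ depends on $u_i$ through the linear term $-\rho_1 u_i/\epsilon$), the gradient in $u_i$ equals
\eq{
    -\rho_1\, e^{u_i + \log \mu_i} \;+\; \rho_1 \sum_j e^{\Kern(u,v)_{i,j}}.
}
Setting this to zero yields the matrix equation $e^{u_i + \log \mu_i} = \sum_j e^{\Kern(u,v)_{i,j}}$. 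Since the right-hand side lies in $\Ss_{++}^d$ (as a sum of matrix exponentials), we may take a matrix logarithm on both sides without ambiguity, obtaining $u_i + \log \mu_i = \LSE_j(\Kern(u,v))_i$, which is exactly \eqref{eq-fixed-point-u}. The same computation carried out with the roles of $u$ and $v$ interchanged gives \eqref{eq-fixed-point-v}.

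The main subtlety, and the only place where care beyond routine calculus is needed, is the handling of singular $\mu_i$ (resp.\ $\nu_j$): in that case $\log \mu_i$ is not defined classically and $e^{u_i + \log \mu_i}$ must be interpreted according to the convention from the notation section, where $\exp(P + \log Q)$ is declared to be the PSD matrix with kernel $\ker Q$ and the natural action on its orthogonal complement. Restricting the whole computation to the orthogonal of $\ker \mu_i$ (on which $\log \mu_i$ is an honest symmetric matrix), the argument above goes through verbatim; on $\ker \mu_i$ the dual objective forces the corresponding eigencomponents of $\ga_{i,j}$ to vanish, consistent with \eqref{eq-primal-dual-scaling}. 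This degenerate-case bookkeeping is the only non-mechanical step; everything else is a one-line gradient computation followed by a matrix logarithm.
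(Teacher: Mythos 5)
Your proof is correct and follows the same route as the paper's: write the first-order optimality condition for each $u_i$ in the dual, obtain $\rho_1 e^{u_i+\log\mu_i} = \rho_1\sum_j e^{\Kern(u,v)_{i,j}}$, and take a matrix logarithm. You simply make explicit the one step the paper leaves implicit, namely the derivative $\nabla_A \tr(e^A)=e^A$ via Duhamel and cyclicity of the trace, and you add a sensible remark about the convention for singular $\mu_i$; both are compatible with, and add detail to, the paper's terse argument.
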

\begin{proof}
	Writing the first order condition of~\eqref{eq-dual-pb} with respect to each $u_i$ leads to
	\eq{
		\rho_1 e^{u_i + \log(\mu_i)} - \rho_1 \sum_{j} e^{\Kern(u,v)_{i,j}} = 0
	} 
	which gives the desired expression. A similar expression holds for the first order conditions with respect to $v_j$.
\end{proof}

A simple fixed point algorithm is then obtained by replacing the explicit alternating minimization with respect to $u$ and $v$ in Dykstra's with just one step of fixed point iteration~\eqref{eq-fixed-point-u} and~\eqref{eq-fixed-point-v}. To make the resulting fixed point contractant and ensure linear convergence, one introduces relaxation parameters $(\tau_1,\tau_2)$. 

The quantum Sinkhorn algorithm is detailed in Algorithm~\ref{alg:sinkhorn}. It alternates between the updates of $u$ and $v$, using relaxed fixed point iterations associated to~\eqref{eq-fixed-point-u} and~\eqref{eq-fixed-point-v}. We use the following $\tau$-relaxed assignment notation 
\eql{\label{eq-dfn-relaxed-assign}
	a \RelaxAssign{\tau} b 
	\quad\text{means that}\quad
	a \leftarrow (1-\tau) a + \tau b.
}
The algorithm outputs the scaled kernel $\ga_{i,j} = \exp(K_{i,j})$.

\begin{rem}[Choice of $\tau_k$]\label{rem-choice-tau}
 In the scalar case, i.e. $d=1$ (and also for isotropic input tensors), when using $\tau_k = \tfrac{\epsilon}{\rho_k+\epsilon}$ for $k=1,2$, one retrieves exactly Sinkhorn iterations for unbalanced transport as described in~\cite{2016-chizat-sinkhorn}, and each update of $u$ (resp.\ $v$) exactly solves the fixed point~\eqref{eq-fixed-point-u} (resp.\ \eqref{eq-fixed-point-v}). 
Moreover, it is simple to check that these iterates are contractant whenever
\eq{
	\tau_k \in ]0,\tfrac{2 \epsilon}{\epsilon+\rho_k}[
	\quad\text{for } k=1,2.
}
	and this property has been observed experimentally for higher dimensions $d=2,3$. Using higher values for $\tau_k$ actually often improves the (linear) convergence rate. Figure~\ref{fig:speed} displays a typical example of convergence, and exemplifies the usefulness of using large values of $\tau_k$, which leads to a speed-up of a factor 6 with respect to the usual Sinkhorn's choice $\tau_k=\tfrac{\epsilon}{\epsilon+\rho_k}$.
\end{rem}

\begin{figure}
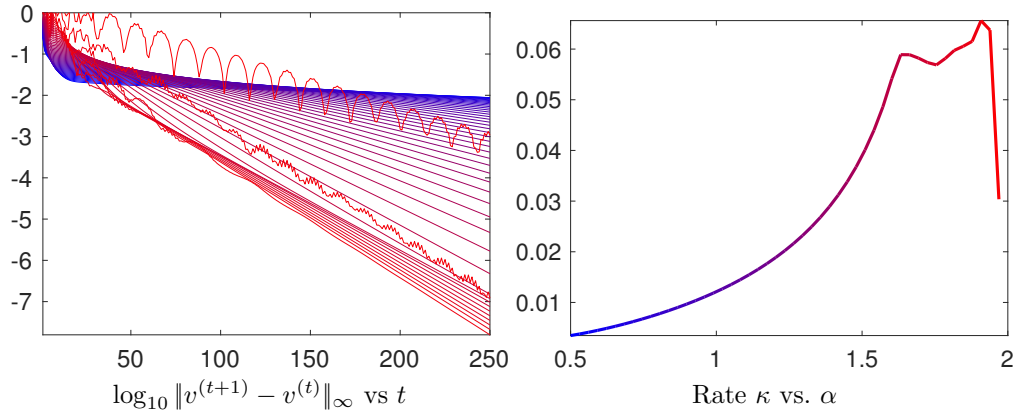
\centering
\begin{tabular}{@{}c@{\hspace{1mm}}c@{}}
\includegraphics[width=.49\linewidth]{speed/convergence-curve}&
\includegraphics[width=.49\linewidth]{speed/convergence-rate}\\
$\log_{10}\norm{v^{(t+1)}-v^{(t)}}_\infty$ vs $t$ & 
Rate $\kappa$ vs. $\al$
\end{tabular}
\caption{Display of convergence of Sinkhorn Algorithm~\ref{alg:sinkhorn} for the example displayed on the first row of Figure~\ref{fig:intro}. 
Denoting $v^{(t)}$ the value of the variable $v$ at iteration $t$, the left plot shows the fixed point residual error for increasing values of $\tau_1=\tau_2=\tfrac{\al\epsilon}{\epsilon+\rho}$ with $\al \in [0.5,2]$ (blue to red). 
The algorithm exhibits a linear convergence rate, $\log_{10}\norm{v^{(t+1)}-v^{(t)}}_\infty \sim - \kappa t$ for some $\kappa>0$, and the right plot displays $\kappa$ as a function of $\al$.  
} \label{fig:speed}
\end{figure}

\begin{rem}[Heavy ball and acceleration]
	The extrapolation steps with weights $(\tau_1,\tau_2)$ are reminiscent of the heavy ball method~\cite{polyak1964some} to accelerate the convergence of first order optimization methods. 
	Non-linear acceleration methods would also be applicable, and would lift the need to manually tune the parameters $(\tau_1,\tau_2)$, see for instance~\cite{anderson1965iterative}.
	Similar acceleration technics can also be derived by using accelerate first order schemes (such as Nesterov's algorithm~\cite{nesterov1983method} or FISTA~\cite{beck2009fast}) directly on the dual problem~\eqref{eq-dual-pb}. Using second order information through quasi-Newton (L-BFGS) or conjugate-gradient is another option. 
	We leave for future work the exploration of these variants of our basic iterates. 
\end{rem}

\begin{rem}[Stability]In contrast to the usual implementation of Sinkhorn's algorithm, which is numerically unstable for small $\epsilon$ because it requires to compute $e^{u/\epsilon}$ and $e^{v/\epsilon}$, the proposed iterations using the LSE operator are stable. The algorithm can thus be run for arbitrary small $\epsilon$, although the linear speed of convergence is of course impacted.   
\end{rem}

\begin{algorithm}[t]
\fbox{\hspace{-.1in}\parbox{\columnwidth}{%
\begin{algorithmic}
\Function{Quantum-Sinkhorn}{$\mu,\nu,c,\epsilon,\rho_1,\rho_2$}
	\algspace
	\State $\foralls k=1,2, \quad \tau_k \in ]0,\tfrac{2 \epsilon}{\epsilon+\rho_k}[$, 
	\Let{$\foralls (i,j) \in I \times J, \quad (u_i,v_j)$}{$(0_{d \times d}, 0_{d \times d})$}
	\For{$s=1,2,3,\ldots$}
		\Let{$K$}{$\Kern(u,v)$}
		\State $\foralls i \in I, \quad u_i \RelaxAssign{\tau_1} \LSE_j(K_{i,j})-\log(\mu_i)$
		\Let{$K$}{$\Kern(u,v)$}
		\State $\foralls j \in J, \quad v_j \RelaxAssign{\tau_2} \LSE_i(K_{i,j})-\log(\nu_j)$
	\EndFor
	\State\Return{$(\ga_{i,j} = \exp(K_{i,j}))_{i,j}$}
\EndFunction
  \end{algorithmic}
}}
\caption{Quantum-Sinkhorn iterations to compute the optimal coupling $\ga$ of the regularized transportation problem~\eqref{eq-Kantorovich-regul}. The operator $\Kern$ is defined in~\eqref{eq-defn-K}.\label{alg:sinkhorn}}
\end{algorithm}

\begin{rem}[log and exp computations]
A major computational workload of the Q-Sinkhorn Algorithm~\ref{alg:sinkhorn} is the repetitive computation of matrix exp and log. 
For $d \in \{2,3\}$ it is possible to use closed-form expressions to diagonalize the tensors, so that the overall complexity is comparable with the usual scalar case $d=1$.
While the applications in Section~\ref{sec:appli} only require these low-dimensional settings, high dimensional problems are of interest, typically for machine learning applications.  In these cases, one has to resort to iterative procedures, such as rapidly converging squaring schemes~\cite{HighamExp,HighamLog}.
\end{rem}

\begin{rem}[Computational complexity]
For low-dimensional problems (typically for those considered in Section~\ref{sec:appli}), the Q-Sinkhorn Algorithm~\ref{alg:sinkhorn} scales to grid sizes of roughly 5k points (with machine-precision solutions computed in a few minutes on a standard laptop).
For large scale grids, even storing the full coupling $\ga$ becomes prohibitive. We however observed numerically that, similarly  to the usual scalar case, the optimal $\ga$ solving~\eqref{eq-Kantorovich-regul} is highly sparse (up to machine precision for small enough $\epsilon$).
We thus found that the use of the multi-scale refinement strategy introduced in~\cite{Schmitzer2016} is able to make the Q-Sinkhorn scale to high resolution grids.  It is not used to produce the figures of this article, but it is available in the companion computational toolbox.
\end{rem}

\begin{rem}[Gurvits' non-commutative Sinkhorn]
Let us insist on the fact that the proposed Q-Sinkhorn Algorithm~\ref{alg:sinkhorn} is unrelated to Gurvits' Sinkhorn algorithm~\cite{gurvits2004classical}. While Gurvits' iterations compute a coupling between a pair of input tensors, our method rather couples two \textit{fields} of tensors (viewed as tensor-valued measures). Our usage of the wording  ``quantum'' refers to the notion of quantum entropy~\eqref{eq-h-quantum} and is not inspired by quantum physics.
\end{rem}

\subsection{Trace-Constrained Extension}

The quantum OT problem~\eqref{eq-Kantorovich} does not impose that the marginals of the coupling $\gamma$ match exactly the inputs $(\mu,\nu)$. It is only in the limit $(\rho_1,\rho_2) \rightarrow (+\infty,+\infty)$ that an exact match is obtained, but as explained in Section~\ref{sec-tensor-ot}, this might leads to an empty constraint set.

To address this potential issue, we propose to rather only impose the \textit{trace} of the marginals to match the trace of the input measures, in order to guarantee conservation of mass (as measured by the trace of the tensors). We thus propose to solve the entropy regularized problem~\eqref{eq-Kantorovich-regul} with the extra constraint
\eq{
	\foralls i \in I, \quad \sum_j \tr(\ga_{i,j}) = \tr(\mu_i) \qandq
	\foralls j \in J, \quad \sum_i \tr(\ga_{i,j}) = \tr(\nu_j).
}
These two extra constraints introduce two dual Lagrange multipliers $(\al,\be) \in \RR^I \times \RR^J$ and the optimal coupling relation~\eqref{eq-primal-dual-scaling} is replaced by 
\eq{\label{eq-primal-dual-scaling_bis}
		\foralls (i,j), \quad \ga_{i,j} = \exp\pa{ \Kern(u,v,\al,\be)_{i,j} }
}
\eq{
		\qwhereq
		\Kern(u,v,\al,\be)_{i,j} \eqdef -\frac{c_{i,j} + \rho_1 u_i + \rho_2 v_j + \al_i + \be_j}{\epsilon}.
}

Q-Sinkhorn algorithm~\ref{alg:sinkhorn} is extended to handle these two extra variables $(\al,\be)$ by simply adding two steps to update these variables
\begin{align*}
	\foralls i \in I, \quad \al_i \leftarrow \al_i + \epsilon \LSTE_j( K )_i \qwhereq K \eqdef \Kern(u,v,\al,\be), \\
	\foralls j \in J, \quad \be_j \leftarrow \be_j + \epsilon \LSTE_i( K )_j \qwhereq K \eqdef \Kern(u,v,\al,\be).
\end{align*}
where we introduced the log-sum-trace-exp operator
\eq{
	\LSTE_j( K ) \eqdef \Big( \log \sum_j \tr( \exp(K_{i,j}) ) \Big)_i
}
(and similarly for $\LSTE_i$). Note that in this expression, the $\exp$ is matrix-valued, whereas the $\log$ is real-valued.

\subsection{Numerical Illustrations}
\label{sec-numerics-interp}

Figures~\ref{fig:intro},~\ref{fig:1d-interp} and~\ref{fig:qot-vs-ot} illustrate on synthetic examples of input tensor fields $(\muA,\muB)$ the Q-OT interpolation method. 
We recall that it is obtained in two steps:
\begin{enumerate}
	\item One first computes the optimal $\ga$ solving~\eqref{eq-Kantorovich-regul} using Sinkhorn iterations (Algorithm~\ref{alg:sinkhorn}).
	\item Then, for any $t \in [0,1]$, one computes $\mu_t$ using this optimal $\ga$ with formula~\eqref{eq-interpolating}.
\end{enumerate}
   
Figure~\ref{fig:1d-interp} shows examples of interpolations on a 1-D domain $X=Y=[0,1]$ with tensors of dimension $d=2$ and $d=3$, and a ground cost $c_{i,j}=|x_i-y_j|^2\Id_{d \times d}$. It compares the OT interpolation, which achieves a ``mass displacement,'' to the usual linear interpolation $(1-t)\mu+t\nu$, which only performs a pointwise interpolation of the tensors. 

Figure~\ref{fig:qot-vs-ot} shows the effect of taking into account the anisotropy of tensors into the definition of OT. In the case of isotropic tensors (see Remark~\ref{rem-classical-ot}), the method reduces to the usual scalar OT, and in 1-D it corresponds to the monotone re-arrangement~\cite{santambrogio2015optimal}. In contrast, the Q-OT of anisotropic tensors is forced to reverse the ordering of the transport map in order for tensors with similar orientations to be matched together. 
This example illustrates that the behaviour of our tensor interpolation is radically different from only applying classical scalar-valued OT to the trace of the tensor (which would result in the same coupling as the one obtained with isotropic tensors, Figure~\ref{fig:qot-vs-ot}, left). 

Figure~\ref{fig:intro} shows larger scale examples. 
The first row corresponds to $X=Y=[0,1]^2$ and $d=2$, with cost $c_{i,j}=\norm{x_i-y_j}^2\Id_{2 \times 2}$, which is a typical setup for image processing.
The second row corresponds to $X=Y$ being a triangulated mesh of a surface, and the cost is proportional to the squared geodesic distance $c_{i,j}=d_X(x_i,y_j)^2 \Id_{2\times 2}$. 


\begin{figure}\centering
\begin{tabular}{@{}c@{}|@{}c@{}}
\includegraphics[width=.49\linewidth]{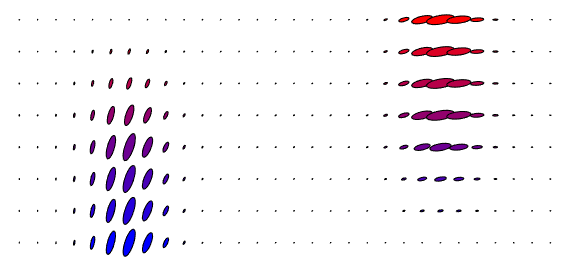}&
\includegraphics[width=.49\linewidth]{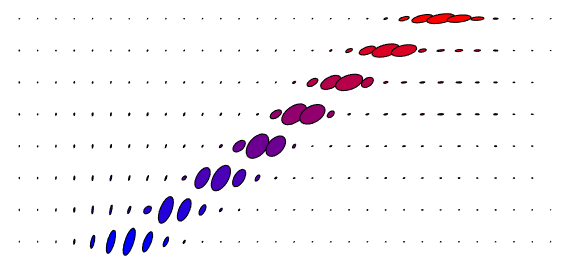}\\\hline
\includegraphics[width=.49\linewidth]{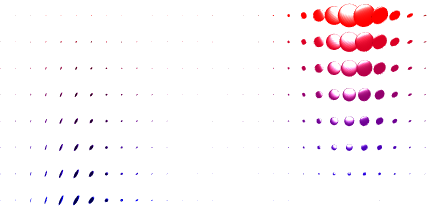}&
\includegraphics[width=.49\linewidth]{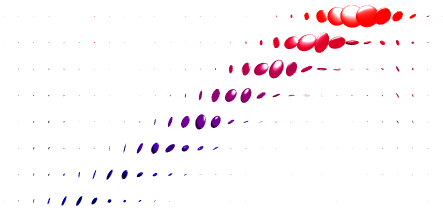}\\\hline
Linear interpolation & Quantum OT
\end{tabular}
\caption{Comparison of linear and quantum-OT interpolation (using formula~\eqref{eq-interpolating}). 
Each row shows a tensor field $\mu_t$ (top $d=2$, bottom $d=3$) along a linear segment from $t=0$ to $t=1$ ($t$ axis is vertical).
} \label{fig:1d-interp}
\end{figure}

\begin{figure}\centering
\begin{tabular}{@{}c@{}|@{}c@{}}
\includegraphics[width=.49\linewidth,trim=40 55 30 48,clip]{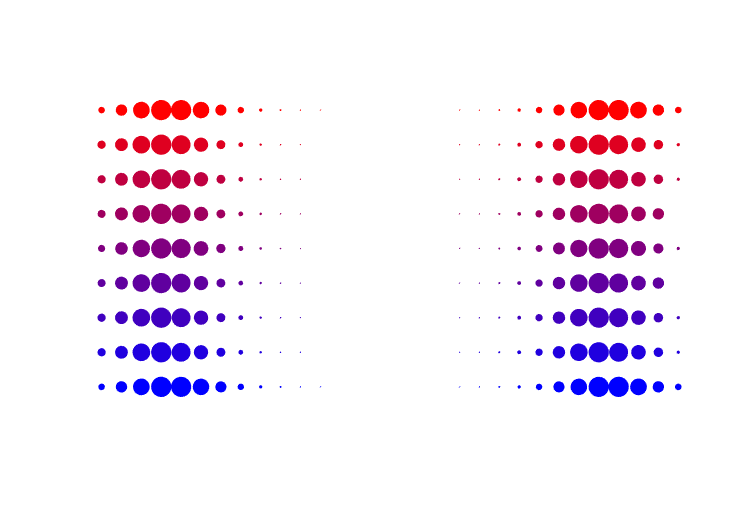}&
\includegraphics[width=.49\linewidth,trim=40 55 30 48,clip]{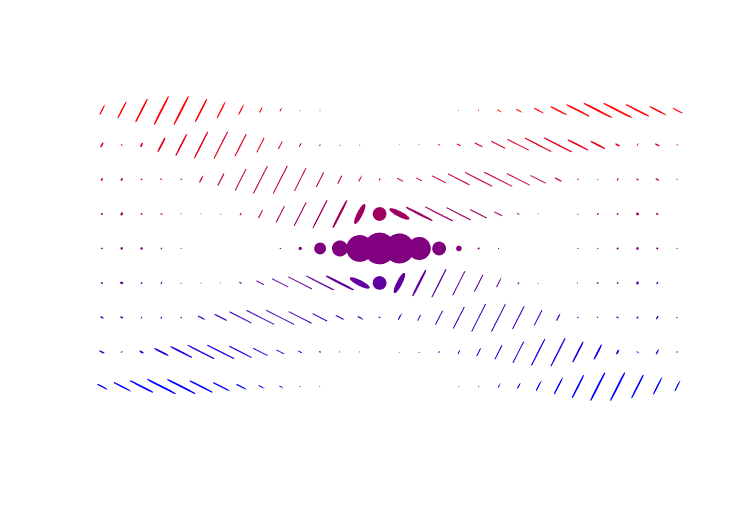}\\\hline
Classical OT & Quantum OT
\end{tabular}
\caption{Comparison of classical OT (i.e. between isotropic tensors) and quantum-OT (between anisotropic tensors) interpolation (using formula~\eqref{eq-interpolating}), using the same display as Figure~\ref{fig:1d-interp}. 
} \label{fig:qot-vs-ot}
\end{figure}


\section{Quantum Barycenters}
\label{sec-q-bary}

Following Agueh and Carlier~\shortcite{agueh-2011} (see also~\cite{benamou-2015,solomon-2015} for numerical methods using entropic regularization), we now propose a generalization of the OT problem~\eqref{eq-Kantorovich}, where, instead of coupling only two input measures, one tries to couple an arbitrary set of inputs, and compute their Fr\'echet means. 

\newcommand{\weight}{w}

\subsection{Barycenter Optimization Problem}

Given some input measures $(\mu^\ell)_\ell$, the quantum barycenter problem reads
\eql{\label{eq-defn-barycenters}
	\umin{\nu} \sum_\ell \weight_\ell W_\epsilon(\mu^\ell,\nu), 
}
where $(\weight_\ell)_\ell$ is a set of positive weights normalized so that $\sum_\ell \weight_\ell=1$. In the following, for simplicity, we set
\eq{
	\rho_1 = \rho 
	\qandq
	\rho_2 = +\infty
}
in the definition~\eqref{eq-Kantorovich} of $W_\epsilon$. Note that the choice $\rho_2=+\infty$ corresponds to imposing the exact hard marginal constraint $\gamma^\top \ones_J=\nu$. 

\begin{rem}[Barycenters between single Dirac masses]
	If all the input measures are concentrated on single Diracs $\mu^\ell=P_\ell \de_{x_\ell}$, then the single Dirac barycenter (unregularized, i.e., $\epsilon=0$) for a cost $d_X(x,y)^\al \Id_{d \times d}$ is $P \de_x^\star$ where $x^\star \in X$ is the usual barycenter for the distance $d_X$, solving 
	\eq{
		x^\star \in \argmin_{x} 
			\Ee(x) = \sum_\ell \weight_\ell d_X^\al(x_\ell,x) 
	}
	and the barycentric matrix is
	\eql{\label{eq-barycentric-matrix}
		P = e^{-\frac{\Ee(x^\star)}{\rho}} \exp\Big(\sum_{\ell} \weight_\ell \log(P_\ell)\Big).
	}
	Figure~\ref{fig:interp} illustrates the effect of a pointwise interpolation (i.e. at the same location $x_\ell$ for all $\ell$) between tensors.
\end{rem}

\begin{figure}\centering
\fbox{\includegraphics[width=.46\linewidth]{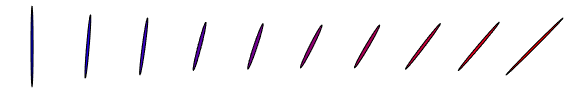}}
\fbox{\includegraphics[width=.46\linewidth]{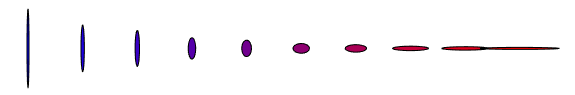}}
\caption{Two examples of pointwise (without transportation) interpolations, using formula~\eqref{eq-barycentric-matrix}. Here $P_1$ and $P_2$ are represented using the blue/red ellipses on the left/right, and weights are $(w_1,w_2)=(1-t,t)$ for $t \in [0,1]$ from left to right.} \label{fig:interp}
\end{figure}

\newcommand{\BaryImg}[2]{{\includegraphics[width=.095\linewidth,trim=18 160 18 160,clip]{2d-bary/barycenter-#1-#2}}}
\newcommand{\BaryImgLine}[1]{
\BaryImg{#1}{1}&\BaryImg{#1}{2}&\BaryImg{#1}{3}&\BaryImg{#1}{4}&\BaryImg{#1}{5} 
}
\newcommand{\BarySurf}[2]{{\includegraphics[width=.095\linewidth,trim=25 10 25 10,clip]{mesh-bary/barycenter-#1-#2}}}
\newcommand{\BarySurfLine}[1]{
\BarySurf{#1}{1}&\BarySurf{#1}{2}&\BarySurf{#1}{3}&\BarySurf{#1}{4}&\BarySurf{#1}{5} 
}

\begin{figure*}\centering
\HighResFig{
\begin{tabular}{@{}c@{}c@{}c@{}c@{}c@{}}
\BaryImgLine{1}\\
\BaryImgLine{2}\\
\BaryImgLine{3}\\
\BaryImgLine{4}\\
\BaryImgLine{5}
\end{tabular}
\hspace{1mm}
\begin{tabular}{@{}c@{}c@{}c@{}c@{}c@{}}
\BarySurfLine{1}\\
\BarySurfLine{2}\\
\BarySurfLine{3}\\
\BarySurfLine{4}\\
\BarySurfLine{5}
\end{tabular}
}{\includegraphics[width=\linewidth]{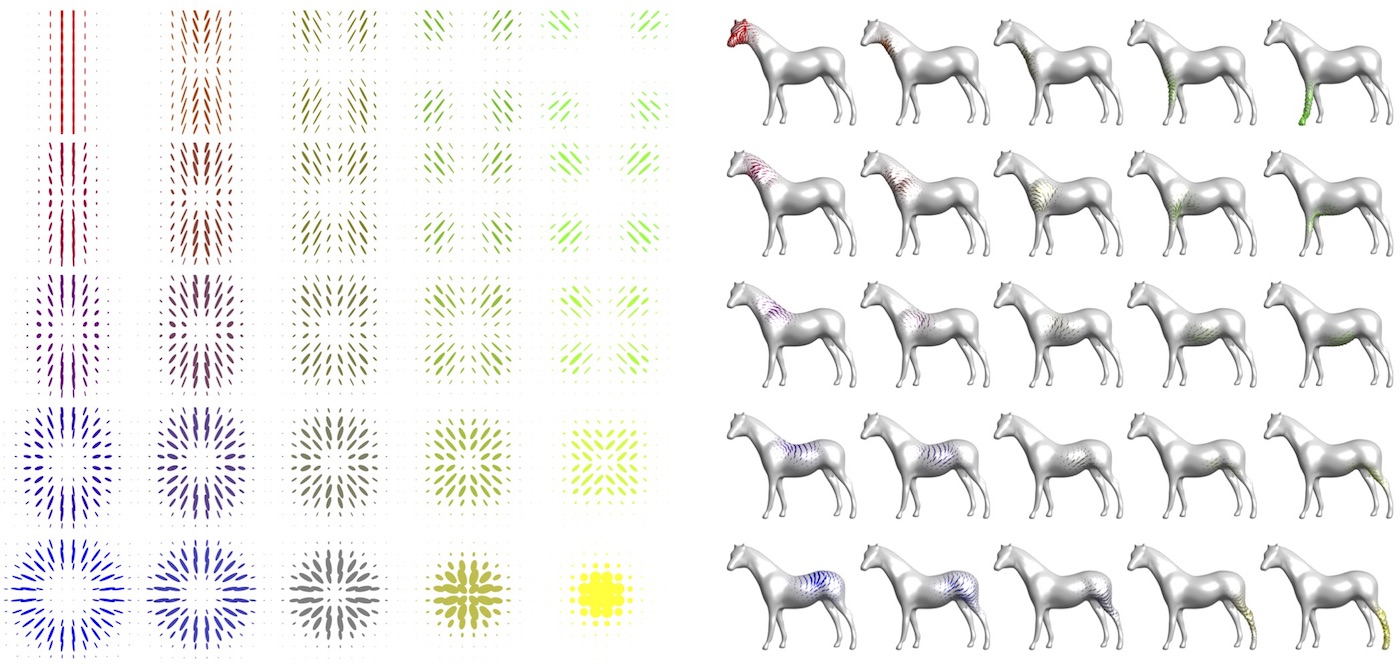}}
\caption{$5 \times 5$ barycenters of four input measures (displayed in the four corners). The weighs $w \in \RR^4$ correspond to bilinear interpolation weights~\eqref{eq-bilinear} inside the square.
} \label{fig:barycenters}
\end{figure*}

Problem~\eqref{eq-defn-barycenters} is convex, and similarly to~\eqref{eq-dual-pb}, it can be rewritten in dual form.

\begin{prop}
The optimal $\nu$ solving~\eqref{eq-defn-barycenters} is the solution of
\begin{multline}\label{eq-dual-bary}		
		\umax{(u^\ell,v^\ell)} \umin{\nu}
				- 
				\sum_\ell w_\ell 
					\tr\Big[
						\rho \sum_i e^{u_i^\ell+\log(\mu_i^\ell)} 
					+    \sum_j \nu_j v_j^\ell
					+    \epsilon \sum_{i,j}  e^{ \Kern(u^\ell,v^\ell)_{i,j} }
			 \Big], 
\end{multline}	
where here we define $\Kern$ as
\eql{\label{eq-def-k-bary}
	\Kern(u,v)_{i,j} \eqdef -\frac{c_{i,j} + \rho u_i +  v_j}{\epsilon}.
}
\end{prop}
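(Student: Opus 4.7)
My plan is to substitute a dual representation of each $W_\epsilon(\mu^\ell,\nu)$ into~\eqref{eq-defn-barycenters}, turning the barycenter problem into a $\min_\nu\max_{(u^\ell,v^\ell)}$, and then swap the two extrema by convex duality. First I would derive the dual of $W_\epsilon(\mu^\ell,\nu)$ specialised to $\rho_1=\rho$, $\rho_2=+\infty$ (the hard marginal $\gamma^\top\ones_I=\nu$). Starting from~\eqref{eq-dual-pb}, the rescaling $v_j \leftarrow v_j/\rho_2$ makes $\Kern$ agree with~\eqref{eq-def-k-bary}, while the first-order expansion
\[
\rho_2\,\tr\bigl(e^{v_j/\rho_2+\log\nu_j}-\nu_j\bigr) \;=\; \tr(v_j\nu_j) \,+\, O(1/\rho_2),
\]
which follows from $\frac{d}{dt}\tr(e^{A+tB})|_{t=0}=\tr(Be^A)$ (itself a consequence of Duhamel's formula together with cyclicity of the trace), replaces the KL penalty on the second marginal by a linear term. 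Passing to $\rho_2\to\infty$ yields
\[
W_\epsilon(\mu^\ell,\nu) \;=\; \max_{u^\ell,v^\ell}\; -\tr\!\Big[\rho\sum_i\bigl(e^{u_i^\ell+\log\mu_i^\ell}-\mu_i^\ell\bigr) + \sum_j v_j^\ell\nu_j + \epsilon\sum_{i,j} e^{\Kern(u^\ell,v^\ell)_{i,j}}\Big],
\]
which could equivalently be obtained by redoing the Fenchel--Rockafellar duality directly with the hard-constraint indicator, whose Legendre transform is linear.

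Plugging this representation into~\eqref{eq-defn-barycenters} and pulling the independent maxima outside the weighted sum produces
\[
\min_\nu\ \max_{(u^\ell,v^\ell)_\ell}\ G\bigl(\nu,(u^\ell,v^\ell)_\ell\bigr),
\]
with $G$ linear (hence simultaneously convex and concave) in $\nu\in(\Ss_+^d)^J$ and jointly concave in the dual variables---each $\tr(e^{\cdot})$ contribution is convex in its matrix argument and becomes concave under the outer minus sign. Applying Sion's minimax theorem (with the compactness caveat discussed below) exchanges $\min$ and $\max$, giving exactly~\eqref{eq-dual-bary} after discarding the $\nu$-independent constants $-\rho w_\ell\sum_i\tr(\mu_i^\ell)$ dropped in the stated form. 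The primal optimum $\nu^\star$ is then the $\nu$-component of the resulting saddle point; the inner $\min_\nu$ alone does not pin it down (being linear in $\nu$), so $\nu^\star$ is identified through the primal--dual relation $\nu_j^\star=\sum_i \exp(\Kern(u^{\ell,\star},v^{\ell,\star})_{i,j})$ inherited from~\eqref{eq-primal-dual-scaling}, which must hold for every $\ell$.

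The main obstacle is the rigorous $\min$-$\max$ exchange: the dual variables range over the non-compact space $\Ss^d$ of symmetric matrices, so Sion's theorem does not apply out of the box. One remedy is to quotient out the gauge invariance of $\Kern$ (the simultaneous substitution $\rho u_i\mapsto \rho u_i+a$, $v_j\mapsto v_j-a$ leaves $\Kern$ unchanged), fix a normalisation, and verify that the dual objective is coercive on the quotient, producing compact sublevel sets on which Sion applies. A cleaner alternative is to invoke Fenchel--Rockafellar strong duality directly, in the spirit of the proof of the earlier Proposition, bypassing any minimax theorem. A secondary subtlety is that the inner $\min_\nu$ in~\eqref{eq-dual-bary} is finite only when $\sum_\ell w_\ell v_j^\ell\preceq 0$; this constraint is automatically saturated at the dual optimum and coincides with the stationarity condition in $\nu$ through which $\nu^\star$ is recovered.
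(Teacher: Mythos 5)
The paper itself does not provide a proof of this proposition, so there is nothing to compare against line-by-line. Your argument, however, is essentially correct and is the natural way to establish the claim.

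Your route --- substituting the dual representation of each $W_\epsilon(\mu^\ell,\nu)$ with $\rho_2=+\infty$ into~\eqref{eq-defn-barycenters} to get a $\min_\nu\max$ problem, then swapping the extrema --- is sound. The rescaling $v_j\mapsto v_j/\rho_2$ followed by the expansion $\rho_2\,\tr(e^{v_j/\rho_2+\log\nu_j}-\nu_j)\to\tr(v_j\nu_j)$ (justified by $\tfrac{d}{dt}\tr(e^{A+tB})|_{t=0}=\tr(Be^A)$) correctly recovers the linear penalty, and this is consistent with dualising the hard constraint $\gamma^\top\ones_I=\nu$ directly via Fenchel--Rockafellar (whose Legendre transform of the indicator of $\{\nu\}$ is indeed the linear functional $\tr(v_j\nu_j)$). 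The concavity of the dual objective in $(u^\ell,v^\ell)$ via convexity of the trace-exponential, and its linearity in $\nu$, are correctly identified, as is the fact that the inner $\min_\nu$ in~\eqref{eq-dual-bary} is linear and so does not by itself determine $\nu^\star$ --- the optimal $\nu$ must instead be read off from the saddle point through the primal--dual relation $\nu_j=\sum_i\exp(\Kern(u^\ell,v^\ell)_{i,j})$, valid for each $\ell$ by the hard marginal constraint. You also correctly flag that the $\nu$-independent constant $\rho\,w_\ell\sum_i\tr(\mu_i^\ell)$ has been dropped between~\eqref{eq-dual-pb} and~\eqref{eq-dual-bary}.

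Two remarks. First, the cleaner and more self-contained route you mention in passing --- applying Fenchel--Rockafellar duality once, directly to the joint convex problem over $(\nu,(\gamma^\ell)_\ell)$ with the linking constraints $\gamma^{\ell,\top}\ones_I=\nu$ --- is the approach most consistent with the earlier Proposition's proof, and it sidesteps the Sion compactness issue entirely; that is likely what the authors intended. Second, your observation that the inner $\min$ forces $\sum_\ell w_\ell v_j^\ell\preceq 0$ (with equality at the optimum) is exactly the stationarity condition~\eqref{eq-fixed-point-nu-bary} stated in the next proposition, so the two statements are consistent; noting this explicitly strengthens the interpretation that the proposition's phrase ``the optimal $\nu$ \ldots is the solution of'' really means ``is the $\nu$-component of the saddle point.''
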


\begin{algorithm}[t]
\fbox{\hspace{-.1in}\parbox{\columnwidth}{%
\begin{algorithmic}
\Function{Quantum-Barycenter}{$(\mu_\ell)_{\ell=1}^L,c,\epsilon,\rho$}
	\algspace
	\State Choose $\tau_1 \in ]0,\tfrac{2 \epsilon}{\epsilon+\rho}[$, 
		$\tau_2 \in ]0,2[$.
	\Let{$\foralls (i,j) \in I \times J, \quad (u_i,v_j)$}{$(0_{d \times d}, 0_{d \times d})$}
	\For{$s=1,2,3,\ldots$.}
		\For{$\ell=1,\ldots,L$}
			\Let{$K^\ell$}{$\Kern(u^\ell,v^\ell)$, }
			\State $\foralls i \in I, \quad u_i^\ell \RelaxAssign{\tau_1} \LSE_j(K_{i,j}^\ell)-\log(\mu_i^\ell)$, 
			\Let{$K^\ell$}{$\Kern(u^\ell,v^\ell)$.}
		\EndFor
		\State $\foralls j \in J, \quad \log(\nu_j) \leftarrow 
			\sum_\ell w_\ell ( \LSE_i(K_{i,j}^\ell) + v^\ell_j/\epsilon ).$
		\For{$\ell=1,\ldots,L$}
			\State $\foralls j \in J, \quad v_j^\ell \RelaxAssign{\tau_2} \epsilon \LSE_i(K_{i,j}^\ell) + v^\ell_j - \epsilon \log(\nu_j).$	
		\EndFor		
	\EndFor
	\State\Return{$\nu$}
\EndFunction
  \end{algorithmic}
}}
\caption{Quantum-Barycenter iterations to compute the optimal barycenter measure $\nu$ solving~\eqref{eq-defn-barycenters}. The operator $\Kern$ is defined in~\eqref{eq-def-k-bary}. \label{alg:barycenter}}
\end{algorithm}

\subsection{Quantum Barycenter Sinkhorn}

Similarly to Proposition~\ref{prop-fixed-points}, the dual solutions of~\eqref{eq-dual-bary} satisfy a set of coupled fixed point equations:

\begin{prop}
Optimal $(u^\ell,v^\ell)_\ell$ for~\eqref{eq-dual-bary} satisfy 
\begin{align}
	\foralls (i,\ell), \quad \LSE_j(\Kern(u^\ell,v^\ell)_{i,j})-\log(\mu_i^\ell), \label{eq-fixed-point-u-bary} 
		&= u_i^\ell \\
	\foralls (j,\ell), \quad \LSE_i(\Kern(u^\ell,v^\ell)_{i,j}) \label{eq-fixed-point-v-bary}
		&= \log(\nu_j)\\ 
		 \textstyle \sum_\ell \weight_\ell v^\ell &= 0. \label{eq-fixed-point-nu-bary}
\end{align}
\end{prop}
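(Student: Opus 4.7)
The plan is to obtain the three identities as the first-order optimality conditions of the saddle-point problem~\eqref{eq-dual-bary}. Since the objective is concave in each block $(u^\ell,v^\ell)$ and linear in $\nu$, it suffices to compute three families of partial gradients and set them to zero, so I would organize the argument in three short blocks corresponding to stationarity in $u^\ell$, $v^\ell$, and $\nu$ respectively.

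First I would handle the stationarity in $u_i^\ell$, which should be a direct replay of the computation in Proposition~\ref{prop-fixed-points} applied to the $\ell$-th summand of the objective. The only terms depending on $u_i^\ell$ are $-\rho w_\ell\,\tr(\exp(u_i^\ell + \log \mu_i^\ell))$ and $-\epsilon w_\ell \sum_j \tr(\exp(\Kern(u^\ell,v^\ell)_{i,j}))$; differentiating, setting the sum to zero, and taking the matrix logarithm should reproduce~\eqref{eq-fixed-point-u-bary} with essentially no modification from the two-marginal case.

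Next I would treat the stationarity in $v_j^\ell$. The dependent terms are now the linear piece $-w_\ell \tr(\nu_j v_j^\ell)$ coming from the $\sum_j \nu_j v_j^\ell$ contribution and the exponential piece $-\epsilon w_\ell \sum_i \tr(\exp(\Kern(u^\ell,v^\ell)_{i,j}))$. The same style of derivative computation should give $\nu_j = \sum_i \exp(\Kern(u^\ell,v^\ell)_{i,j})$, hence~\eqref{eq-fixed-point-v-bary} after taking the matrix logarithm. I would also remark that this identity, holding for every $\ell$, forces the right-hand side to be independent of $\ell$, reflecting the hard marginal constraint implicit in $\rho_2 = +\infty$.

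Finally, the stationarity in $\nu_j$ is the simplest: the only $\nu_j$-dependent contribution to the Lagrangian is the linear functional $-\tr\bigl(\nu_j \sum_\ell w_\ell v_j^\ell\bigr)$, and I would conclude that, since $\nu_j$ is minimized over the cone $\Ss_+^d$ with an interior optimum (as is generic for the strongly regularized problem), the coefficient $\sum_\ell w_\ell v_j^\ell$ must vanish, giving~\eqref{eq-fixed-point-nu-bary}. The only technical subtlety across the three blocks is differentiating $\tr(\exp(\cdot))$ in a matrix variable, but because $\Kern$ is affine in each of the variables the Duhamel integral collapses via cyclicity of the trace, so one avoids the full non-commuting Fr\'echet-derivative machinery. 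This is precisely the shortcut already used implicitly in Proposition~\ref{prop-fixed-points}, so no new ingredient is needed.
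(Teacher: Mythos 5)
Your proof is correct and follows the same route as the paper's (which is essentially a one-liner): the conditions~\eqref{eq-fixed-point-u-bary} and~\eqref{eq-fixed-point-v-bary} are obtained exactly as in Proposition~\ref{prop-fixed-points} by writing the first-order stationarity conditions in $u^\ell_i$ and $v^\ell_j$, and~\eqref{eq-fixed-point-nu-bary} comes from the stationarity (or, equivalently, finiteness of the inner minimum) in the linear variable $\nu$. The extra remarks you add --- that~\eqref{eq-fixed-point-v-bary} holding for all $\ell$ encodes the hard second marginal constraint, and that the Duhamel formula collapses under the trace because $\Kern$ is affine in each dual variable --- are accurate and make explicit what the paper leaves implicit; they do not change the argument.
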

\begin{proof}
The proof of~\eqref{eq-fixed-point-u-bary} and~\eqref{eq-fixed-point-v-bary} is the same as the one of Proposition~\ref{prop-fixed-points}.
Minimization of~\eqref{eq-dual-bary} on $\nu$ leads to~\eqref{eq-fixed-point-nu-bary}. 
\end{proof}

The extension of the quantum Sinkhorn algorithm to solve the barycenter problem~\eqref{alg:barycenter} is detailed in Algorithm~\ref{alg:barycenter}. It alternates between the updates of $u$, $\nu$ and $v$, using the relaxed version of the fixed point equations~\eqref{eq-fixed-point-u-bary}, \eqref{eq-fixed-point-v-bary} and~\eqref{eq-fixed-point-nu-bary}. The notation $\RelaxAssign{\tau}$ refers to a relaxed assignment as defined in~\eqref{eq-dfn-relaxed-assign}. 

\begin{rem}[Choice of $\tau$]
Remark~\ref{rem-choice-tau} also applies for this Sinkhorn-like scheme, and setting $(\tau_1,\tau_2)=(\tfrac{\epsilon}{\rho+\epsilon},1)$ leads, in the scalar case $d=1$, to the algorithm in~\cite{2016-chizat-sinkhorn}. We found experimentally that this choice leads to contracting (and hence linearly converging) iterations, and that higher values of $\tau$ usually accelerate the convergence rate. 
\end{rem}

\begin{rem}[Scalar and isotropic cases]
Note that in the scalar case $d=1$ and for isotropic input tensors (multiples of the identity), one retrieves the provably convergent unbalanced barycenter algorithm in~\cite{2016-chizat-sinkhorn}.
\end{rem}

\subsection{Numerical Illustrations}

Figure~\ref{fig:barycenters} shows examples of barycenters $\nu$ solving~\eqref{eq-defn-barycenters} between four input measures $(\mu^\ell)_{\ell=1}^4$. The horizontal/vertical axes of the figures are indexed by $(t_1,t_2) \in [0,1]^2$ (on a $5 \times 5$ grid) and parameterize the weights $(w_\ell)_{\ell=1}^4$ appearing in~\eqref{eq-defn-barycenters} as
\eql{\label{eq-bilinear}
	(w_1,w_2,w_3,w_4) \eqdef ( (1-t_1)(1-t_2), (1-t_1)t_2, t_1(1-t_2), t_1,t_2 ). 
}
The left part of Figure~\ref{fig:barycenters} corresponds to measures on $X=Y=[0,1]^2$ with $d=2$ and ground cost $c_{i,j}=\norm{x_i-x_j}^2\Id_{2 \times 2}$.
The right part of Figure~\ref{fig:barycenters} corresponds to measures on $X=Y$ being a surface mesh with $d=2$ (the tensors are defined on the tangent planes) and a ground cost is $c_{i,j}=d_X(x_i,x_j)^2\Id_{2 \times 2}$ where $d_X$ is the geodesic distance on the mesh.


\section{Applications}
\label{sec:appli}

This section showcases four different applications of Q-OT to register and interpolate tensor fields.
Unless otherwise stated, the data is normalized to the unit cube $[0,1]^d$ (here $d=2$ for images) and discretized on grids of $|I|=|J|=50^d$ points. 
The regularization parameter is set to $\epsilon=0.08^2$, the fidelity penalty to $\rho=1$, and the relaxation parameter for Sinkhorn to $\tau_k=\tfrac{1.8 \epsilon}{\epsilon+\rho_k}$. 


\subsection{Anisotropic Space-Varying Procedural Noise}

\newcommand{\BumpFig}[1]{\includegraphics[width=.17\linewidth,trim=140 10 125 0,clip]{mesh-bump/anisodiffus-#1}}
\newcommand{\TextureImg}[2]{\includegraphics[width=.19\linewidth]{textures/#1/interpol-#2}}
\begin{figure}\centering
\begin{tabular}{@{}c@{\hspace{.5mm}}c@{\hspace{.5mm}}c@{\hspace{.5mm}}c@{\hspace{.5mm}}c@{\hspace{.5mm}}c@{}}
\TextureImg{2d-bump-donut}{render-1}&
\TextureImg{2d-bump-donut}{render-3}&
\TextureImg{2d-bump-donut}{render-5}&
\TextureImg{2d-bump-donut}{render-7}&
\TextureImg{2d-bump-donut}{render-9}&
\includegraphics[height=.19\linewidth]{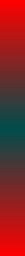} \\
\BumpFig{1}&
\BumpFig{3}&
\BumpFig{5}&
\BumpFig{7}&
\BumpFig{9}&\\
$t=0$ & $t=1/4$ & $t=1/2$ & $t=3/4$ & $t=1$
\end{tabular}
\caption{Example of interpolation between two input procedural anisotropic noise functions. The PSD tensor field parameterizing the texture are displayed on Figure~\ref{fig:intro}. The colormap used to render the anisotropic texture is displayed on the last column.  
} \label{fig:texture}
\end{figure}

Texture synthesis using procedural noise functions is widely used in rendering pipelines and video games because of both its low storage cost and the fact that it is typically parameterized by a few meaningful parameters~\cite{LagaeSurvey}. 
Following Lagae et al.~\shortcite{LagaImproving} we consider here a spatially-varying Gabor noise function (i.e.\ non-stationary Gaussian noise), whose covariance function is parameterized using a PSD-valued field $\mu$. 
Quantum optimal transport allows to interpolate and navigate between these noise functions by transporting the corresponding tensor fields. 
The initial Gabor noise method makes use of sparse Gabor splattering~\cite{LagaeSurvey} (which enables synthesis at arbitrary resolution and zooming). For simplicity, we rather consider here a more straightforward method, where the texture $f_{t_0}$ is obtained by stopping at time $t=t_0$ an anisotropic diffusion guided by the tensor field $\mu$  of a high frequency noise $\Nn$ (numerically a white noise on a grid)
\eq{
	\frac{\partial_t f_t}{\partial t} = \text{div}( \mu \nabla f_t ), \qwhereq
	f_{t=0} \sim \Nn, 
}
where $(\mu \nabla f_t)(x) \eqdef \mu(x) (\nabla f_t(x))$ is the vector field obtained by applying the tensor $\mu(x) \in \Ss_2^+$ to the gradient vector $\nabla f_t(x) \in \RR^2$. 
Locally around $x$, the texture is stretched in the direction of the main eigenvector of $\mu(x)$,  highly anisotropic tensor giving rise to elongated ``stripes'' as opposed to isotropic tensor generating ``spots.''

Numerically, $f$ is discretized on a 2-D grid, and $\mu$ is represented on this grid as a sum of Dirac masses~\eqref{eq-input-measures}. On Euclidean domains $X$, $\nabla$ and div are computed using finite differences, while on triangulated mesh, they are implemented using standard piecewise-linear finite element primitives. 
Figure~\ref{fig:texture} shows two illustrations of this method. The top row generates an animated color texture by indexing a non-linear black-red colormap (displayed on the right) using $f_t$. The bottom row generates an animated bump-mapped surface using $f_t$ to offset the mesh surface in the normal direction.


\subsection{Anisotropic Meshing}

Approximation with anisotropic piecewise linear finite elements on a triangulated mesh is a fundamental tool to address tasks such as discretizing partial differential equations, performing surface remeshing~\cite{alliez2003anisotropic} and image compression~\cite{demaret2006image}.
A common practice is to generate triangulations complying with a PSD tensor sizing field $\mu$, i.e. such that a triangle centered at $x \in X$ should be inscribed in the ellipsoid $\enscond{u \in X}{(u-x)^\top \mu(x) (u-x) \leq \de }$ for some $\de$ controlling the triangulation density. 
A well-known result is that, to locally approximate a smooth convex $C^2$ function $f$,  the optimal shapes of triangles is dictated by the Hessian $H f$ of the function (see~\cite{shewchuk2002good}). In practice, people use $\mu(x) = |H f(x)|^\al$ for some exponent $\al > 0$ (which is related to the quality measure of the approximation), where $|\cdot|^\al$ indicates the spectral application of the exponentiation (as for matrix exp or log).

Figure~\ref{fig:meshing} shows that Q-OT can be used (using formula~\eqref{eq-interpolating}) to interpolate between two sizing fields $(\mu,\nu)$, which are computed from the Hessians (with here $\al=1$) of two initial input images $(f,g)$.
The resulting anisotropic triangulations are computed using the method detailed in~\cite{peyre-iccv-09}. They corresponds to geodesic Delaunay triangulations for the Riemannian metric defined by the tensor field. 
This interpolation could typically be used to track the evolution of the solution of some PDE. 

\newcommand{\MeshingImg}[2]{\includegraphics[width=.195\linewidth]{meshing/#1/input-#2}}
\begin{figure}
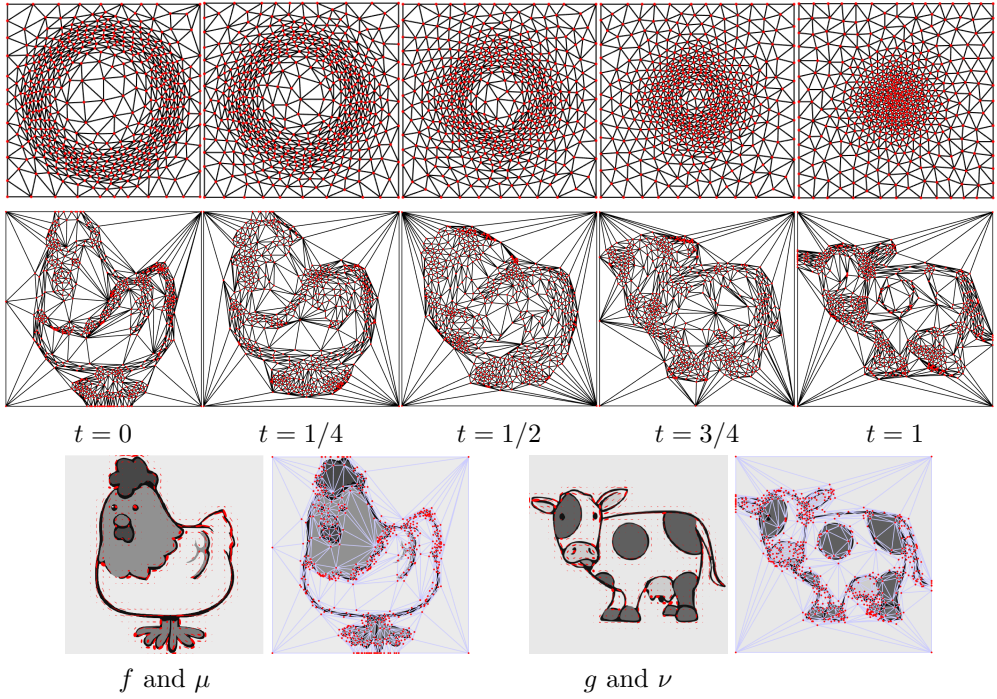
\centering
\begin{tabular}{@{}c@{}c@{}c@{}c@{}c@{}}
\MeshingImg{2d-bump-donut}{mesh-1}&
\MeshingImg{2d-bump-donut}{mesh-3}&
\MeshingImg{2d-bump-donut}{mesh-5}&
\MeshingImg{2d-bump-donut}{mesh-7}&
\MeshingImg{2d-bump-donut}{mesh-9}\\
\MeshingImg{images}{mesh-1}&
\MeshingImg{images}{mesh-3}&
\MeshingImg{images}{mesh-5}&
\MeshingImg{images}{mesh-7}&
\MeshingImg{images}{mesh-9}\\
$t=0$ & $t=1/4$ & $t=1/2$ & $t=3/4$ & $t=1$
\end{tabular}
\begin{tabular}{@{}c@{\hspace{1mm}}c@{\hspace{8mm}}c@{\hspace{1mm}}c@{}}
\MeshingImg{images}{images-1}&
\MeshingImg{images}{mesh-1-img}&
\MeshingImg{images}{images-2}&
\MeshingImg{images}{mesh-9-img} \\
$f$ and $\mu$ & & $g$ and $\nu$ &
\end{tabular}
\caption{Two examples of interpolation between two input sizing fields $(\mu_{t=0},\mu_{t=1})=(\mu,\nu)$. 
\textbf{First row:} triangulation evolution for the sizing fields displayed on Figure~\ref{fig:intro}.
\textbf{Second row:} the input sizing fields $(\mu_{t=0},\mu_{t=1})=(\mu,\nu)$ are displayed on the third row, and are defined using the absolute value ($\al=1$) of the Hessian of the underlying images $(f,g)$.
} \label{fig:meshing}
\end{figure}


\subsection{Diffusion Tensor Imaging}

Diffusion tensor magnetic resonance imaging (DTI) is a popular technique to image the white matter of the brain (see~\cite{wandell2016clarifying} for a recent overview). DTI measures the diffusion of water molecules, which can be compactly encoded using a PSD tensor field $\mu(x) \in \Ss_+^3$, whose anisotropy and size matches the local diffusivity. 
A typical goal of this imaging technique is to map the brain anatomical connectivity, and in particular track the  white matter fibers. This requires a careful handling of the tensor's energy (its trace) and anisotropy, so using Q-OT is a perfect fit for such data.

Figure~\ref{fig:dti} shows an application of Q-OT for the interpolation (using~\ref{eq-interpolating}) between 2-D slices from DTI tensor fields $(\mu,\nu)$ acquired on two different subjects. This data is extracted from the studies~\cite{pestilli2014evaluation,takemura2016ensemble}. These two patients exhibit different anatomical connectivity geometries, and Q-OT is able to track the variation in both orientation and magnitude of the diffusion tensors. This figure also compares the different data fidelity parameters $\rho \in \{0.05,1\}$. Selecting $\rho=1$ enforces an overly-strong conservation constraint and leads to interpolation artifacts (in particular some structure are split during the interpolation). In contrast, selecting $\rho=0.05$ introduces enough mass creation/destruction during the interpolation to be able to cope with strong inter-subject variability.

\newcommand{\DTIimg}[1]{\includegraphics[width=.195\linewidth,trim=85 20 85 20,clip]{dti/#1}}

\begin{figure}\centering
\HighResFig{
\begin{tabular}{@{}c@{\hspace{1mm}}c@{\hspace{1mm}}c@{\hspace{1mm}}c@{\hspace{1mm}}c@{}}
\DTIimg{one-two/interpol-rho1-1}&
\DTIimg{one-two/interpol-rho1-3}&
\DTIimg{one-two/interpol-rho1-5}&
\DTIimg{one-two/interpol-rho1-7}&
\DTIimg{one-two/interpol-rho1-9}\\
\DTIimg{one-two/interpol-rho005-1}&
\DTIimg{one-two/interpol-rho005-3}&
\DTIimg{one-two/interpol-rho005-5}&
\DTIimg{one-two/interpol-rho005-7}&
\DTIimg{one-two/interpol-rho005-9}\\
\DTIimg{four-two/interpol-rho005-1}&
\DTIimg{four-two/interpol-rho005-3}&
\DTIimg{four-two/interpol-rho005-5}&
\DTIimg{four-two/interpol-rho005-7}&
\DTIimg{four-two/interpol-rho005-9}\\
$t=0$ & $t=1/4$ & $t=1/2$ & $t=3/4$ & $t=1$
\end{tabular}
\begin{tabular}{@{}c@{\hspace{2mm}}c@{\hspace{2mm}}c@{}}
\includegraphics[width=.3\linewidth]{dti/one-two/original-trace-1-roi}&
\includegraphics[width=.3\linewidth]{dti/one-two/original-trace-2-roi}&
\includegraphics[width=.3\linewidth]{dti/four-two/original-trace-1-roi}\\
Subject $A$ & Subject $B$ & Subject $C$  
\end{tabular}
}{\includegraphics[width=\linewidth]{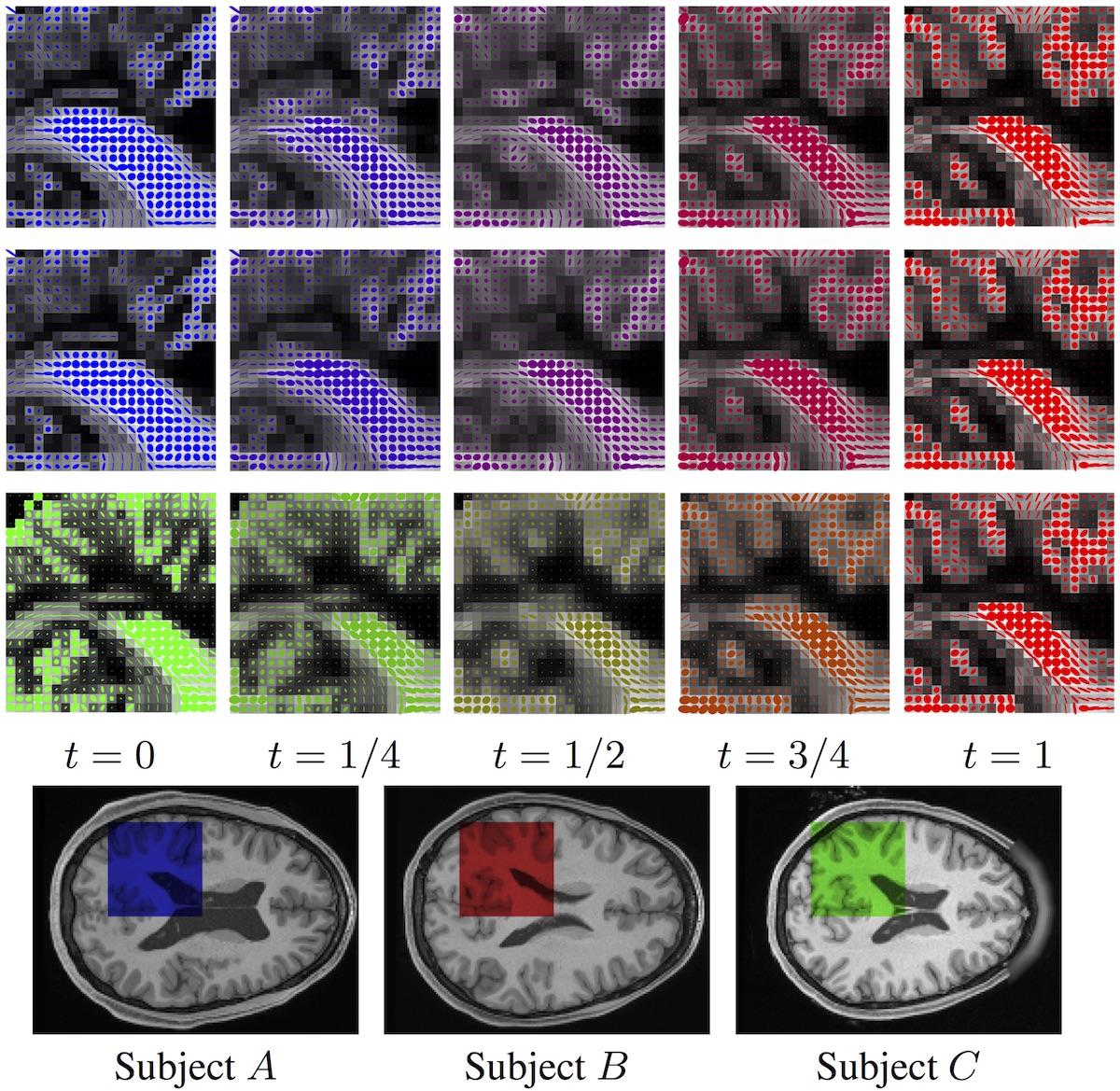}}
\caption{Interpolation between two 2-D slices of 3-D DTI tensor fields $(\mu,\nu)=(\mu_{t=0},\mu_{t=1})$. For readability, only the X/Y components of the tensors are displayed. 
\textbf{First row:} interpolation between subjects $(A,B)$ obtained using $\rho=1$. 
\textbf{Second row:} interpolation between subjects $(A,B)$ obtained using $\rho=0.05$. 
\textbf{Third row:} interpolation between subjects $(C,B)$ obtained using $\rho=0.05$. 
\textbf{Fourth row:} anatomical MRI images of subjects $(A,B,C)$ indicating the region of interest where the computations are performed. 
} \label{fig:dti}
\end{figure}

\subsection{Spectral Color Texture Synthesis}

As advocated initially in~\cite{galerne2011random}, a specific class of textured images (so-called micro-textures) is well-modeled using stationary Gaussian fields. In the following, we denote $p$ the pixel positions and $x$ the Fourier frequency indices. For color images, these fields are fully characterized by their mean $m \in \RR^3$ and their Fourier power spectrum, which is a tensor valued field $\mu(x)$ where, for each frequency $x$ (defined on a 2-D grid) $\mu(x) \in \CC^{3 \times 3}$ is a complex positive semi-definite hermitian matrix. 

In practice, $\mu(x)$ is estimated from an exemplar color image $f(p) \in \RR^3$ using an empirical spectrogram 
\eql{\label{eq-power-spectrum}
	\mu(x) \eqdef \frac{1}{K} \sum_{k=1}^K \hat f_k(x) \hat f_k(x)^* \in \CC^{3 \times 3}
}
where $\hat f_k$ is the Fourier transform of $f_k(p) \eqdef f(p) w_k(p)$ (computed using the FFT), $w_k$ are windowing functions centred around $K$ locations in the image plane, and $v^* \in \CC^{1 \times 3}$ denotes the transpose-conjugate of a vector $v \in \CC^{3 \times 1}$. 
Increasing the number $K$ of windowed estimations helps to avoid having rank-deficient covariances ($K=1$ leads to a field $\mu$ of rank-1 tensors).

Randomized new textures are then created by generating random samples $F(p) \in \RR^3$ from the Gaussian field, which is achieved by defining the Fourier transform $\hat F(x) \eqdef m + \hat N(x) \sqrt{\mu(x)} \ones_3$, where $N(p)$ is the realization of a Gaussian white noise, and $\sqrt{\cdot}$ is the matrix square root (see~\cite{galerne2011random} for more details).

Figure~\ref{fig:texsynth} shows an application where two input power spectra $(\mu,\nu)$ (computed using~\eqref{eq-power-spectrum} from two input textures exemplars $(f,g)$)  are interpolated using~\eqref{eq-interpolating}, and for each interpolation parameter $t \in [0,1]$ a new texture $F$ is synthesized and displayed.
Note that while the Q-Sinkhorn Algorithm~\ref{alg:sinkhorn} is provided for real PSD matrices, it extends verbatim to complex positive hermitian matrices (the matrix logarithm and exponential being defined the same way as for real matrices).


\newcommand{\TexSynthImg}[1]{\includegraphics[width=.195\linewidth]{texsynth/#1}}
\begin{figure}
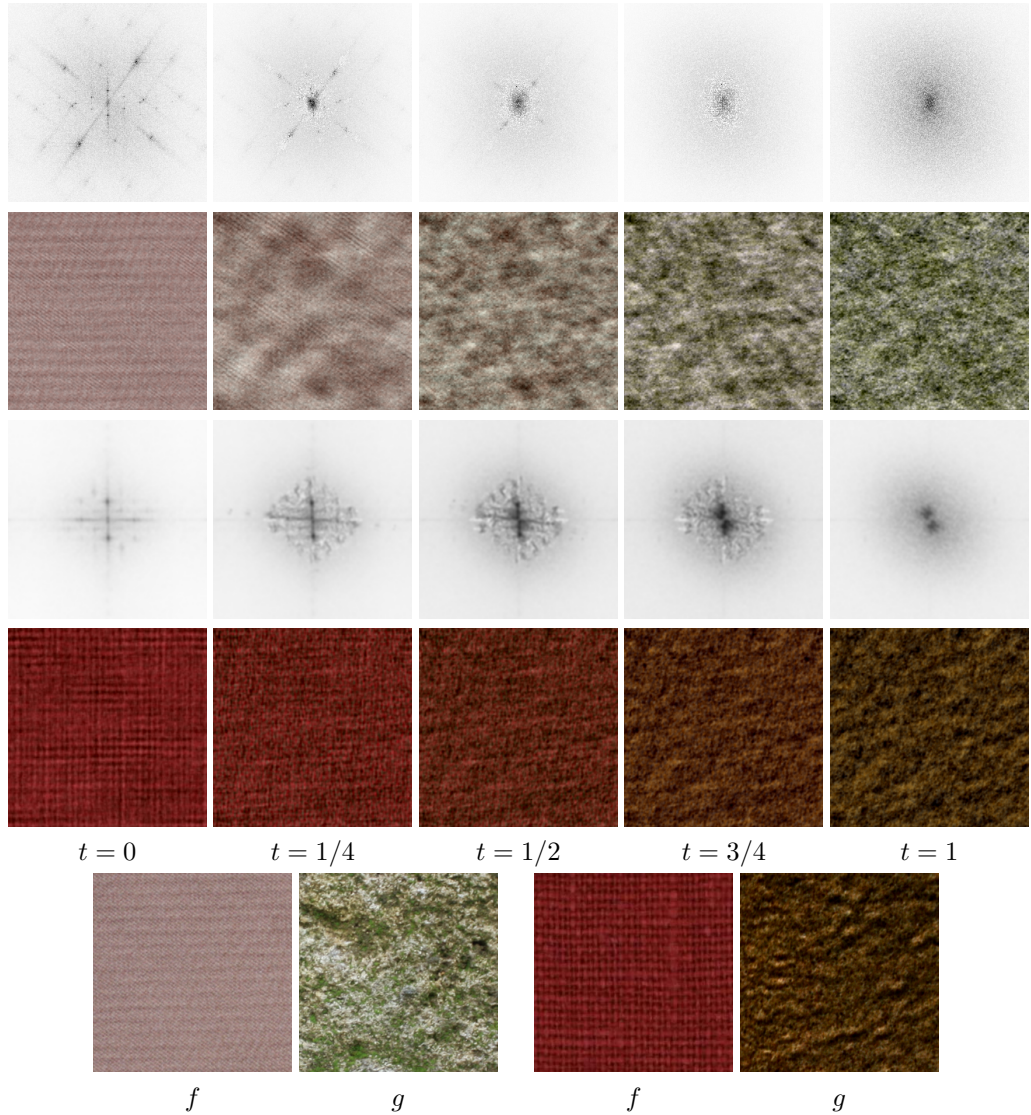
\centering
\begin{tabular}{@{}c@{\hspace{1mm}}c@{\hspace{1mm}}c@{\hspace{1mm}}c@{\hspace{1mm}}c@{}}
\TexSynthImg{synth1/spectrum-1}&
\TexSynthImg{synth1/spectrum-3}&
\TexSynthImg{synth1/spectrum-5}&
\TexSynthImg{synth1/spectrum-7}&
\TexSynthImg{synth1/spectrum-9}\\
\TexSynthImg{synth1/synthesis-1}&
\TexSynthImg{synth1/synthesis-3}&
\TexSynthImg{synth1/synthesis-5}&
\TexSynthImg{synth1/synthesis-7}&
\TexSynthImg{synth1/synthesis-9}\\
\TexSynthImg{synth2/spectrum-1}&
\TexSynthImg{synth2/spectrum-3}&
\TexSynthImg{synth2/spectrum-5}&
\TexSynthImg{synth2/spectrum-7}&
\TexSynthImg{synth2/spectrum-9}\\
\TexSynthImg{synth2/synthesis-1}&
\TexSynthImg{synth2/synthesis-3}&
\TexSynthImg{synth2/synthesis-5}&
\TexSynthImg{synth2/synthesis-7}&
\TexSynthImg{synth2/synthesis-9}\\
$t=0$ & $t=1/4$ & $t=1/2$ & $t=3/4$ & $t=1$
\end{tabular}
\begin{tabular}{@{}c@{\hspace{1mm}}c@{\hspace{5mm}}c@{\hspace{1mm}}c@{}}
\TexSynthImg{synth1/original-1}&
\TexSynthImg{synth1/original-2}&
\TexSynthImg{synth2/original-1}&
\TexSynthImg{synth2/original-2}\\
$f$  & $g$ & $f$  & $g$ 
\end{tabular}
\caption{\textbf{Row 1 and 3:}  display $\tr(\mu_t(x))$ where $\mu_t$ are the interpolated power spectra. 
\textbf{Rows 2 and 4:} realizations of the Gaussian field parameterized by the power spectra  $\mu_t$. 
\textbf{Row 5:} input texture exemplars from which $(\mu_{t=0},\mu_{t=1})=(\mu,\nu)$ are computed.
} \label{fig:texsynth}
\end{figure}


\section{Conclusion}

In this work, we have proposed a new static formulation for OT between tensor-valued measures. This formulation is an extension of the recently proposed unbalanced formulation of OT. A chief advantage of this formulation is that, once coupled with quantum entropic regularization, it leads to an effective numerical scheme, which is easily extended to the computation of barycenters. 

The proposed formulation is quite versatile, and can be extended to other convex cones beyond PSD matrices, which is a promising direction for future work and applications.
Other possible research directions also include investigating relationships between this static formulation of tensor-valued OT and dynamic formulations. 

\section*{Acknowledgements}

The work of Gabriel Peyr\'e has been supported by the European Research Council (ERC project SIGMA-Vision). 
J.\ Solomon acknowledges support of Army Research Office grant W911NF-12-R-0011 (``Smooth Modeling of Flows on Graphs'').
DTI data were provided by Franco Pestilli (NSF IIS 1636893; NIH ULTTR001108) and the Human Connectome Project (NIH 1U54MH091657). 
The images for the texture synthesis experiments are from IPOL.\footnote{\url{www.ipol.im}}
The images for the triangular meshing experiments are from Wikimedia.

\bibliographystyle{plain}
\bibliography{refs}

\end{document}